%% file: index.tex
\documentclass{ieeetj}
\usepackage{cite}
\usepackage{amsmath,amssymb,amsfonts}
\usepackage{algorithmic}
\usepackage{graphicx,color}
\usepackage{textcomp}
\usepackage{xcolor}
\usepackage{hyperref}
\hypersetup{hidelinks=true}
\usepackage{algorithm,algorithmic}
\def\BibTeX{{\rm B\kern-.05em{\sc i\kern-.025em b}\kern-.08em
    T\kern-.1667em\lower.7ex\hbox{E}\kern-.125emX}}
\AtBeginDocument{\definecolor{tmlcncolor}{cmyk}{0.93,0.59,0.15,0.02}\definecolor{NavyBlue}{RGB}{0,86,125}}

\usepackage{msc}

\usepackage{cite}
\usepackage{amsmath,amssymb,amsfonts,amsbsy}
\usepackage{algorithmic}
\usepackage{graphicx}
\usepackage{textcomp}
\usepackage{xcolor}
\usepackage{siunitx}
\usepackage{svg}
\usepackage{xurl}
\usepackage{hyperref}
\usepackage{amsthm}
\usepackage{mathtools}
\theoremstyle{plain}

\hypersetup{
    colorlinks=true,
    linkcolor=blue,
    urlcolor=blue,
    citecolor=blue,
}
\usepackage{tabularx}
\usepackage{subcaption}
\usepackage{bm}

\def\BibTeX{{\rm B\kern-.05em{\sc i\kern-.025em b}\kern-.08em
    T\kern-.1667em\lower.7ex\hbox{E}\kern-.125emX}}

\input{lib/proof}

\hypersetup{hidelinks=true}
\usepackage{algorithm,algorithmic}
\def\BibTeX{{\rm B\kern-.05em{\sc i\kern-.025em b}\kern-.08em
    T\kern-.1667em\lower.7ex\hbox{E}\kern-.125emX}}
\AtBeginDocument{\definecolor{tmlcncolor}{cmyk}{0.93,0.59,0.15,0.02}\definecolor{NavyBlue}{RGB}{0,86,125}}

\input{command}

\input{peformance}
\def\authorrefmark#1{\ensuremath{^{\textbf{#1}}}}

\def\OJlogo{}
\def\seclogo{}
\usepackage{etoolbox}

\makeatletter
\patchcmd{\@maketitle}
  {\ifx\@revised\@empty}
  {\ifx\@reviseddate\@empty}
  {}{}

\patchcmd{\@maketitle}
  {\fi.\space\ifx\@editor}
  {\fi\space\ifx\@editor}
  {}{}
\makeatother

\makeatletter

\let\origpsplain\ps@plain
\def\ps@plain{%
  \origpsplain
  \def\@oddfoot{\hbox to\textwidth{\hfill{\rffont\thepage}}}%
  \def\@evenfoot{\hbox to\textwidth{{\rffont\thepage}\hfill}}%
}

\let\origpsheadings\ps@headings
\def\ps@headings{%
  \origpsheadings
  \def\@oddfoot{\hbox to\textwidth{\hfill{\rffont\thepage}}}%
  \def\@evenfoot{\hbox to\textwidth{{\rffont\thepage}\hfill}}%
}

\makeatother
\begin{document}

\markboth{}{Akama {et al.}}

\title{STAR: Scalable TEE-based Anonymous Rate-limiter}

\author{Kosei Akama\authorrefmark{1,2}, Yoshimichi Nakatsuka\authorrefmark{1}, Koichi Moriyama\authorrefmark{2} and Keisuke Uehara\authorrefmark{2}}
\affil{National Institute of Information and Communications Technology, Koganei, Tokyo, Japan}
\affil{Keio University, Fujisawa, Kanagawa, Japan}
\corresp{Corresponding author: Kosei Akama (email: akama@nict.go.jp).}

\newcommand{\taggedpara}[1]{\noindent\textbf{#1.}}
\newcommand{\utaggedpara}[1]{\noindent\underline{#1.}}
\newcommand{\secretdepend}[1]{\textbf{\nolinkurl{#1}}}

\begin{abstract}

Protecting Internet services from attacks such as TLS/HTTP floods requires rate limiting, yet identifying users to enforce these limits can compromise their privacy.
Anonymous rate limiting promises to reconcile these goals, but existing schemes either struggle to scale or assume that services and authorities do not collude.
We present \sys{}, which uses server-side trusted execution environments (TEEs) to overcome these limitations. %
By enabling lightweight cryptographic operations, TEEs make authentication efficient, but introduce a privacy challenge: malicious services can observe memory-access patterns to infer user secrets.
We address this threat through countermeasures against secret-dependent memory accesses and assess our implementation using dynamic analysis.
To extend these efficiency gains across servers, we also design a secure sharding mechanism that distributes client requests.
We demonstrate that strong privacy guarantees can coexist with efficient authentication at scale; \sys{} processes authentication requests in tens of microseconds and handles approximately \totalThroughputNum{K} requests per second per CPU core.
\end{abstract}

\begin{IEEEkeywords}
Anonymous Rate-Limiting, Trusted Execution Environment
\end{IEEEkeywords}

\maketitle

\input{text/1_intro}

\input{text/2_background}
\input{text/3_related_work}
\input{text/4_system}
\input{text/5_design}
\input{text/6_protocol}

\input{text/7_implementation}

\input{text/8_evaluation}

\input{text/9_security}
\input{text/10_discussion}

\input{text/11_conlusion}

\bibliographystyle{IEEEtran}
\bibliography{citation}

\section*{Acknowledgment}
The authors thank Hayato Kimura (NICT) for advice on the security analysis of \sys. 

\appendix
\input{text/C_notation}
\input{text/D_proceedure}

\input{table/notation}
\input{text/B_side-channel-vector}
\input{text/A_proof}

\vfill\pagebreak

\end{document}

%% file: lib/proof.tex
\newtheorem{definition}{Definition}
\newtheorem{theorem}{Theorem}

\newcommand{\gametitlesep}{2pt}

\newenvironment{gameblock}[2][0.47\columnwidth]
{%
  \begin{minipage}[t]{#1}
  \vspace{0pt}%
  \centering
  {\scriptsize\bfseries #2}\par
  \vspace{\gametitlesep}%
  \hrule height 0.35pt
  \vspace{1pt}%
  \raggedright
  \scriptsize
  \begin{tabular}{@{}l@{}}
}
{%
  \end{tabular}
  \end{minipage}%
}

\newenvironment{queryblock}[1]
{%
  \centering
  {\scriptsize\bfseries #1}\par
  \vspace{1pt}%
  \hrule height 0.35pt
  \vspace{1pt}%
  \raggedright
  \scriptsize
  \begin{tabular}{@{}l@{}}
}
{%
  \end{tabular}\par
}

\newcommand{\gametoprule}{%
  \par
  \nointerlineskip
  \hrule height 0.45pt
  \vspace{\gametitlesep}%
}

\newcommand{\gamebottomrule}{%
  \vspace{2pt}%
  \hrule height 0.45pt
}

%% file: command.tex
\newcommand{\sys}{{\texttt{STAR}}}

\newcommand{\enc}{\mathtt{Seal}}
\newcommand{\dec}{\mathtt{Unseal}}
\newcommand{\mac}{\mathtt{MAC}}
\newcommand{\prf}{\mathtt{PRF}}
\newcommand{\hash}{\mathtt{Hash}}
\newcommand{\cred}{\mathtt{cred}}

\newcommand{\code}{\mathtt{code}}
\newcommand{\identity}{\mathtt{identity}}
\newcommand{\cnt}{\mathtt{cnt}}
\newcommand{\uid}{\mathtt{uid}}

\newcommand{\sk}{\mathtt{sk}}
\newcommand{\pk}{\mathtt{pk}}
\newcommand{\ck}{\mathtt{ck}}

\newcommand{\cipher}{\mathtt{cipher}}
\newcommand{\isequal}{\stackrel{?}{=}}

\newcommand{\usk}{\mathtt{usk}}
\newcommand{\gpk}{\mathtt{gpk}}
\newcommand{\IsMember}{\mathtt{IsMember}}

\newcommand{\test}{\mathtt{chall}}
\newcommand{\user}{\mathsf{user}}
\newcommand{\enclave}{\mathsf{enclave}}
\newcommand{\service}{\mathsf{service}}

\newcommand{\game}{\mathsf{Game}}

\newcommand{\setup}{\mathsf{Setup}}
\newcommand{\register}{\mathsf{Reg}}
\newcommand{\auth}{\mathsf{Auth}}

\newcommand{\unlink}{\mathsf{unlink}}

\newcommand{\ratelimit}{\mathsf{ratelimit}}
\newcommand{\oracle}{O}
\newcommand{\negl}{\mathsf{negl}}
\newcommand{\getsR}{\mathrel{\xleftarrow{\$}}}
\newcommand{\adversary}{\mathcal{A}}          %
\newcommand{\advantage}{\operatorname{Adv}} %
\newcommand{\bits}{\{0,1\}}

\newcommand{\leak}{\mathsf{Leak}}
\newcommand{\leakage}{\mathsf{leakage}}
\newcommand{\AccessPattern}{\mathtt{access}}
\newcommand{\AccessPatternLeakage}{\leakage_\AccessPattern}
\newcommand{\LeakAccessPattern}{\leak_\AccessPattern}

%% file: peformance.tex
\DeclareSIUnit{\cps}{cps}

\newcommand{\encryptLatency}{\SI{87.2}{\micro\second}}

\newcommand{\routeChooseLatency}{\SI{216}{\nano\second}}

\newcommand{\decryptLatency}{\SI{63.1}{\micro\second}}

\newcommand{\verificationLatency}{\SI{294}{\nano\second}}

\newcommand{\tokenGenerateLatency}{\SI{242}{\nano\second}}

\newcommand{\dedupeCheckLatency}{\SI{423}{\nano\second}}

\newcommand{\clientSumLatency}{\SI{87.4}{\micro\second}}

\newcommand{\serverSumLatency}{\SI{71.7}{\micro\second}}

\newcommand{\callLatency}{\SI{5.04}{\micro\second}}

\newcommand{\otherLatency}{\SI{2.20}{\micro\second}}

\newcommand{\scrappySignLatency}{\SI{1.36}{\milli\second}}

\newcommand{\scrappyVerifyLatency}{\SI{5.41}{\milli\second}}

\newcommand{\arcSignLatency}{\SI{1.62}{\milli\second}}

\newcommand{\arcVerifyLatency}{\SI{2.03}{\milli\second}}

\newcommand{\tlsLatency}{\SI{203}{\micro\second}}

\newcommand{\tlsMissExtensionLatency}{\SI{8.11}{\micro\second}}

\newcommand{\tlsInvalidCredLatency}{\SI{83.2}{\micro\second}}

\newcommand{\tlsReplayCredLatency}{\SI{83.6}{\micro\second}}

\newcommand{\payloadSize}{\SI{192}{\byte}}

\newcommand{\clientFasterThanExisting}{15.6}

\newcommand{\serverFasterThanExisting}{28.3}

\newcommand{\tlsThroughput}{\SI{4.92}{\kilo\cps}}

\newcommand{\tlsMissExtensionThroughput}{\SI{123}{\kilo\cps}}

\newcommand{\tlsInvalidCredThroughput}{\SI{12.0}{\kilo\cps}}

\newcommand{\tlsReplayCredThroughput}{\SI{12.0}{\kilo\cps}}

\newcommand{\throughputLargerThanTLS}{2.43}
\newcommand{\totalThroughputNum}{13.9}
\newcommand{\totalThroughput}{\SI{13.9}{\kilo\cps}}

\newcommand{\scrappyThroughput}{\SI{185}{\cps}}

\newcommand{\arcThroughput}{\SI{493}{\cps}}

\newcommand{\HLSize}{21,800}

\newcommand{\HLStorage}{\SI{1.08}{\mega\byte}}

%% file: text/1_intro.tex
\section{Introduction}

\subsection{Motivation}
In recent years, the abuse of Internet services has become increasingly problematic. In particular, attacks that generate far more requests than a service anticipates are a serious concern, as they waste server resources and can ultimately deny access to legitimate clients.
One way to address this issue is to identify users through identity verification and \emph{rate-limit} the number of requests a user can make.
However, this approach raises privacy concerns because it enables \emph{user tracking}, linking requests from the same user.
Consider the following cases where such a tension may occur:
\begin{itemize}
    \item \textbf{Web Application and API Protection (WAAP) } that block users who send excessive HTTP requests or establish excessive TLS connections, potentially as part of HTTP or TLS flooding attacks, may want to avoid using identifiers (e.g., cookies) for user privacy.
    
    \item \textbf{Prepaid services} that handle sensitive information (e.g., location information collected via toll roads) must limit each customer's usage to the purchased quota while preventing linking customer actions across sessions.
    
    \item \textbf{Services that use AI inference} that handle sensitive topics (e.g., personal health) must cap user requests to prevent large numbers of inference requests that lead to substantial API charges while protecting user privacy.
    
\end{itemize}

\subsection{Prior Work and Problem Statement}
To address this tension, previous work has proposed anonymous rate-limiting schemes.
Many of the prominent approaches rely on advanced cryptographic techniques (e.g., zero-knowledge proofs~\cite{kTAA,RLN,ARC,Scrappy}) or client-side TEEs~\cite{CACTI}. However, all of these prior works leverage cryptographic mechanisms that can impose high verification costs, which reduce a service's resilience to DoS and raise scalability challenges.
Other prior works introduce a third-party authority that is assumed not to collude with the service~\cite{PrivacyPass, Nymble} to perform rate limiting using a more lightweight cryptographic mechanism. However, there is no reliable way to guarantee non-collusion, and privacy concerns therefore remain.
This motivates an anonymous rate-limiting protocol that (1) is DoS resilient and scalable and (2) allows users to remain anonymous even if the service colludes with an authority.

\subsection{Our Approach}
In this paper, we propose \sys{} (\textbf{S}calable \textbf{T}EE-based \textbf{A}nonymous \textbf{R}ate-limiter), a novel anonymous rate-limiting scheme that utilizes a server-side Trusted Execution Environment (TEE) operated by the service to address the above challenges. First, by leveraging the confidentiality of a TEE, \sys{} protects user anonymity and remains robust against anonymity violations by an authority, even when the authority colludes with the service. Second, by executing security- and privacy-sensitive logic inside a TEE, \sys{} simplifies and accelerates processing, enabling high scalability by avoiding expensive cryptographic verification on the critical path.
However, leveraging a server-side TEE comes with its own challenges, as we discuss below.

\sys{} consists of two protocols: registration and authentication. In the registration protocol, the service verifies the user's identity and utilizes its TEE to issue a credential to the user. In the authentication protocol, the user encrypts the credential together with pieces of metadata (including an epoch and a counter) and submits the ciphertext to the server-side TEE. The TEE then decrypts the ciphertext, performs several checks, and, based on the decrypted values, computes a blinded one-time identifier, i.e., ``blind token''. This blind token is then sent to the TEE host (i.e., the service), to which the host checks whether it has seen this value in the past. If it has, the service denies the user access to its service, thus enforcing the rate-limit.
Most importantly, rate limiting is performed with respect to the blind token that is not linkable to the user's identity, realizing user privacy.

\taggedpara{Design Challenge \& Solution}
\sys{} achieves anonymity by using a TEE during authentication to conceal user-identifying information from the service itself. However, simply placing this functionality within the TEE is not enough, as we must take into account that a malicious service may compromise user anonymity by observing secret-dependent control flow or memory-access patterns, which may leak user-related secrets and enable user tracking.

To address this risk, we carefully design \sys{} so that it has no control-flow or memory-access patterns that depend on secrets. Specifically, when generating the blind token used for rate-limiting, \sys{} uses only algorithms without dangerous secret-dependent branches and memory accesses. 

Another challenge is regarding scaling \sys{} across multiple servers, that is, efficiently distributing the used-blind-token-sets. Distributed storage~\cite{stoica2001dht} can introduce performance overhead that may become a bottleneck for \sys{}, given its significantly lower processing cost than prior schemes.

To overcome this challenge, \sys{} employs sharding, where the destination server for each request is determined by a public value derived from its blind token. Each server can publicly verify the assignment, preventing malicious users from bypassing rate limits by replaying the same token across different shards. Moreover, because blind tokens are unlinkable, the routing information does not reveal user identities or enable requests from the same user to be linked.

\taggedpara{Advantages and Limitations}
\sys{} is designed to be practical in real-world environments. To demonstrate this, we implemented an end-to-end Proof-of-Concept (PoC) implementation consisting of a Core Library, a WAAP-like Rate-Limiting Service, a Web App Server, and a Client-side Proxy. 
We also use TLS extensions to carry \sys's payload, making \sys, to the best of our knowledge, the first anonymous rate-limiting implementation designed to mitigate TLS flooding.
Our open-sourced \sys{} code is publicly available, contributing to reproducibility\footnote{https://github.com/nict-sflab/star}.

To validate its scalability, we conducted a comprehensive performance evaluation of \sys{}. First, we measured server-side latency during authentication and demonstrate that \sys{} completes authentication processing on the order of a few hundred microseconds. This is about $\serverFasterThanExisting\times$ faster than the two prior works we compare against, namely ARC~\cite{ARC} and Scrappy~\cite{Scrappy}. Next, we measured server throughput and show that \sys{} can process approximately \totalThroughput{} (connections per second) authentication requests per CPU core. We also evaluated memory consumption, which is only \HLStorage{} per CPU core in Wikipedia-like large-scale settings. Finally, we evaluated communication overhead of \sys{} and this results only \payloadSize{} per authentication phase, which is smaller than that of prior work.

We also analyze the security of \sys{} by defining a formal security model using a game-based approach and proved that the \sys{} protocol satisfies the unlinkability and rate limitability properties, required for anonymous rate limiting, via hybrid games. %
In addition to the formal analysis, we performed a Timecop-style~\cite{Timecop} dynamic leakage analysis to search for secret-dependent branches and memory accesses in the Core Library and show that our implementation is free from access-pattern leakage.

Although \sys{} has numerous advantages, it also has limitations.
    Most notably, our access-pattern analysis focuses on secret-dependent control flow and memory accesses in the protocol logic; platform-dependent and physical leakage from the TEE is outside its scope.

\taggedpara{Contributions}
In summary, we introduce \sys{} that offers:
\begin{enumerate}
    \item a novel, scalable anonymous rate-limiting scheme built on a server-side TEE, with a design that avoids secret-dependent control flow and memory accesses to prevent leakage of user information, and sharding to distribute requests across multiple servers.
    \item practicality: our PoC confirms that \sys{} can be deployed under modeled real-world environment settings.
    \item high performance: server latency is on the order of a few hundred microseconds, throughput is \totalThroughput{} per CPU core, memory consumption is \HLStorage{} per CPU core, and communication cost is \payloadSize{}.
    \item security and anonymity: our dynamic analysis of secret-dependent branches and memory accesses and formal security analysis demonstrate that the \sys{} protocol satisfies rate-limitability and unlinkability.
\end{enumerate}

%% file: text/2_background.tex
\section{Background}

\subsection{Trusted Execution Environments}
A Trusted Execution Environment (TEE) is a hardware-backed execution environment that protects the confidentiality and integrity of its code and data, preventing unauthorized access even from privileged software (e.g., OS/Hypervisor).
Typically realized as specialized CPU features, this work assumes a TEE that provides the following fundamental guarantees:
\textbf{(i) Process Isolation}: A TEE isolates its program from all software running outside of it, protecting the program memory and internal state.
\textbf{(ii) Memory Encryption (ME)}: The internal state of a TEE-protected process is transparently encrypted using CPU-resident keys before being written to main memory. %
\textbf{(iii) Remote Attestation (RA)}: A TEE generates cryptographic evidence demonstrating to a remote verifier that (1) the communication endpoint is a genuine TEE backed by legitimate hardware, and (2) the exact intended program is executing within it.

Despite these robust protections, programs executing within TEEs remain vulnerable to adversaries inferring sensitive information by observing memory-access patterns (e.g., controlled channel attacks~\cite{ControlledSideChannelAttack}).

\subsection{$k$-Times Anonymous Authentication}
(Periodic) $k$-times anonymous authentication ($k$-TAA~\cite{kTAA}) is a privacy-preserving authentication scheme that enforces strict, per-period access limits.
Let $\usk$ denote the user's secret key, $\cred$ their membership credential, $\gpk$ the group public key, $\cnt$ the current usage counter, $K$ the maximum permitted counter value, and $t$ the current time window.
To authenticate, a user generates a zero-knowledge proof for the service, demonstrating the following statement (where $\prf(\cdots)$ denotes a PRF function and $\IsMember(\cdot, \cdot, \cdot)$ denotes a membership verification protocol):
\begin{align}
\textit{PK} \{ (\usk, \cnt) : h &= \prf(t, \cnt \parallel \usk) \land \cnt < K \notag \\
&\quad \land \IsMember(\usk, \cred, \gpk) = 1 \} \notag
\end{align}

The service accepts the authentication if the proof is successfully verified and the value $h$ has not been previously recorded for the current period $t$.

%% file: text/3_related_work.tex
\section{Related Work}
\label{sec:related-work}

To achieve both user anonymity and rate limiting, many anonymous rate-limiting schemes have been proposed in the literature, summarized in Table~\ref{tab:related-work}.
However, existing approaches generally suffer from at least one of the following two critical limitations: (1) they are susceptible to collusion among trusted authorities, which undermines user privacy, and/or (2) they exhibit poor scalability, where their deployment can inadvertently lead to denial-of-service (DoS) vulnerabilities on the services they are meant to protect.

\begin{table}[t]
    \centering
    \caption{Comparison of anonymous rate-limiting works. $T$ is the number of sub-periods in a period, $K$ is the max number that users can access the system, and $L$ is the number of pseudonyms used. 
    }
    \begin{tabular}{|c|c|c|}\hline
         Scheme & Anonymity Trust anchor & Server load \\ \hline \hline
         Privacy Pass & Non-colluding authorities & Light ($O(1)$) \\\hline
         Nymble & Non-colluding authorities & Light ($O(T)$) \\\hline
         k-TAA & None & Heavy ($O(K)$) \\\hline
         RLN & None & Heavy ($O(\log K + \log L) $) \\\hline
         ARC & None & Heavy ($O(\log K)$) \\\hline
         Scrappy & None & Heavy ($O(1)$) \\\hline
         CACTI & None & Heavy ($O(1)$) \\\hline
         \sys{} & \textbf{TEE} & \textbf{Light ($O(1)$)} \\\hline
    \end{tabular}
    \label{tab:related-work}
\end{table}

The first category of approaches relies on the assumption of non-colluding authorities, namely Privacy Pass~\cite{PrivacyPass} and Nymble~\cite{Nymble}.
Privacy Pass utilizes cryptographic techniques to restrict access frequency while preserving user anonymity. Nevertheless, this system remains vulnerable to timing-correlation attacks if the underlying authorities collude.
Although Nymble is an anonymous blocklisting system, it also limits user access rate.
However, if authorities in these systems collude with the service, users can be completely de-anonymized.
Furthermore, due to the inherent design constraints of anonymous blocklisting, Nymble strictly prohibits a user from maintaining multiple pseudonyms within a single time window, reducing flexibility.

Another category of solutions, including (periodic) $k$-times anonymous authentication ($k$-TAA)~\cite{kTAA} and related protocols like Scrappy~\cite{Scrappy}, ARC~\cite{ARC}, and RLN~\cite{RLN}, leverages zero-knowledge proofs (ZKPs) to mathematically guarantee that a user's access frequency remains below a predefined threshold without revealing any information to the service.
However, generating and verifying ZKPs incurs substantial computational overhead.
Under high-frequency access scenarios, such as those generated by automated bots, this overhead strictly bounds the service's overall scalability.

Alternatively, CACTI~\cite{CACTI} explored constructing anonymous rate limiting by enforcing access limits using client-side TEEs. However, a fundamental limitation of this approach is that it excludes clients operating on hardware that does not support TEEs.
Furthermore, to preserve user anonymity, CACTI requires the service to verify expensive privacy-preserving TEE attestation mechanisms (in CACTI's case, EPID~\cite{EPID}), which introduces scalability bottlenecks.

%% file: text/4_system.tex
\section{System and Threat Model}
\label{sec:system}
\subsection{Intuitive Definition}
This section outlines the system architecture and the threat model considered in this work as well as the requirements that must be satisfied by \sys{}.

The primary entities within our system are:
\begin{itemize}
    \item \textbf{User:} that consumes the provided service.
    \item \textbf{Service:} that provides services to users.
    \item \textbf{Auth Enclave:} which is a trusted component responsible for authenticating users.
    The enclave logic strictly executes within a server-operated TEE.
\end{itemize}

We consider the user and service to be potentially malicious.
A malicious user attempts to attack the service by exceeding the maximum allowed threshold of $K$ access attempts within a given time window.
The goal of a malicious service is to invade user privacy; specifically, the service may log and analyze all available user requests and memory access traces as well as modify the protocol ordering in an attempt to track or de-anonymize users.

Under these anticipated threats, \sys{} must be designed to satisfy the following core security requirements:
\begin{itemize}
    \item \textbf{Rate-Limiting:} A user can successfully authenticate at most $K$ times per time window; and
    \item \textbf{Unlinkability:} The service cannot track a user across multiple authentication executions, i.e., the service cannot ascertain whether any two cryptographic payloads and observable memory-access traces originated from the same user.
\end{itemize}

In addition to these security requirements, \textbf{scalability} is a critical non-security requirement.
Consequently, both the Auth Enclave and the underlying service in \sys{} must be highly performant and capable of handling a massive volume of concurrent requests from a large user base.

To achieve these requirements, we establish several security assumptions:
First, we exclude adversaries that are typically considered out of scope in TEE-based systems, including adversaries that directly compromise the security guarantees of the TEE platform itself.
    For example, we assume an OS/hypervisor-level adversary but strictly exclude hardware-level adversaries.
        Specifically, we consider leakage caused by secret-dependent control flow and memory-access patterns, including array accesses, that can be observed by the service~\cite{ControlledSideChannelAttack}.
        Platform- or microarchitecture-dependent attacks whose success is independent of the protocol logic are outside of this work's scope.
Second, \sys{} adopts several assumptions common to other rate-limiting schemes.
    In particular, \sys{} assumes that users have already correctly obtained the global parameters shared by all entities, such as the TEE public key $\pk$.
    To this end, we can consider that these global parameters are published on a trusted bulletin board (e.g., Certificate Transparency logs) together with \sys{}'s RA evidence generated by the TEE.
    Using these materials, users are assumed to have verified that the global parameters are authentic, are bound to a genuine TEE, and correspond to the expected \sys{} code before executing the protocol.
    Additionally, we treat clock skew between clients as orthogonal to our core focus and assume that all clients synchronize their clocks to the correct current time.
    We also exclude network DoS attacks operating below Layer 5 (e.g., TCP SYN floods) and IP-based tracking attacks.
    We assume that these threats are adequately mitigated through domain-specific deployment mechanisms, such as L3 firewalls for lower-layer DoS attacks and Tor for IP-based tracking.
    Finally, Sybil attacks are similarly orthogonal to our core focus.
    Consequently, we assume that \sys{} integrates robust identity-verification mechanisms during registration, such as the hardware-backed certificates used in \cite{Scrappy}, so that a single user cannot register multiple distinct accounts.

\subsection{Formal Definition}
\label{sec:formal-def}
The \sys{} and the security requirements of \sys{} can be defined formally as follows:

\begin{definition}[The definition of \sys{}]
\sys{} is a set of the following processes:

\begin{itemize}
    \item {\bf $\register = (\register_{\user, 1}, \register_{\enclave}, \register_{\user, 2})$:}
    This process allows a new user to register with the \sys{} system. First, the user runs $(\cipher, \ck) \leftarrow \register_{\user, 1}()$ and sends cipher $\cipher$ to the service. The service then invokes the TEE to compute $\cipher' \leftarrow \register_{\enclave}(\cipher)$ and returns $\cipher'$ to the user. Finally, the user retrieves the credential by running $\cred \leftarrow \register_{\user, 2}(\cipher', \ck)$ with the common key $\ck$.
    \item {\bf $\auth = (\auth_{\user}, \auth_{\enclave}, \auth_{\service})$:} 
    This process allows users to send requests to the service subject to the rate limit. First, the user generates $(\cipher'', s, \cnt) \leftarrow \auth_\user(\cred, \cnt, \pk)$ with a user counter $\cnt$ and sends the resulting request $\cipher''$ to the service indexed $s$. The service then invokes the TEE to compute $h \leftarrow \auth_\enclave(\cipher'', t', \sk)$. Finally, the service verifies the TEE's response and then inserts $h$ into duplication checking list by running $HL_s \leftarrow \auth_\service(h, HL_s)$.
\end{itemize}

\end{definition}

\input{image/rate-limiting}

\input{image/unlinkability.tex}
\begin{definition}[Rate-Limitability of the \sys{}]
Rate-Limitability is defined as the property that, for any probabilistic polynomial-time (PPT) adversary $\adversary$, the probability that $\adversary$ outputs $1$ in the rate-limiting game shown in Figure~\ref{fig:rate-limitability} is negligible in the security parameter $\lambda$; that is,

$$
    \Pr\!\left[
        \game_\adversary^\ratelimit(1^\lambda)=1
    \right]
    \leq \negl(\lambda).
$$

\end{definition}

It is noteworthy that this rate-limitability game models the worst-case setting in which only adversarial users are considered. Accordingly, the adversary wins if it achieves more than $K \times N$ successful authentications within a single time window, where $N$ denotes the number of malicious users registered by the adversary.

\begin{definition}[Unlinkability of the \sys{}]
For any PPT adversary $\adversary$, we define its advantage in the Unlinkability game shown in Figure~\ref{fig:unlinkability} as

$$
\advantage^\unlink_{\adversary}(\lambda)
\coloneqq
\left|
\Pr\!\left[
\game^\unlink_{\adversary}(1^\lambda)=1
\right]
-\frac{1}{2}
\right|.
$$

The protocol satisfies unlinkability if, for every PPT adversary $\adversary$, there exists a negligible function $\negl$ such that

$$
\advantage_\adversary^\unlink(\lambda)
\leq
\negl(\lambda).
$$
\end{definition}

Notably, in the unlinkability game, the adversary can observe the Enclave's memory-access patterns during both registration and authentication, denoted by $\AccessPatternLeakage$. Thus, \sys{} must ensure that this leakage does not enable the adversary to determine whether two authentication requests originate from the same user.

%% file: image/rate-limiting.tex
\newcommand{\refresh}{\mathtt{Refresh}}
\begin{figure}[t]
  \centering
  \gametoprule

  \begin{gameblock}{$\game_{\adversary}^{\ratelimit}(\lambda)$}
    $(\pk,\sk, HL_s) \gets \setup(1^\lambda)$\\
    $t' \leftarrow 0$\\
    $i \leftarrow 0$\\
    $N \leftarrow 0$\\
    $1 \gets \adversary^{\oracle_\ratelimit}()$\\[2pt]

    \textbf{return} $[i > K \times N]$
  \end{gameblock}%
  \hfill
  \begin{minipage}[t]{0.47\columnwidth}
    \vspace{0pt}

    \begin{queryblock}{$\oracle_\ratelimit^{\register}(\cipher)$}
      $\cipher' \gets \register_\enclave(\cipher, \sk)$\\
      $N \leftarrow N + 1$\\
      \textbf{return} $\cipher'$
    \end{queryblock}

    \vspace{6pt}

    \begin{queryblock}{$\oracle_\ratelimit^{\auth}(\cipher'', s)$}
      $h \gets \auth_\enclave(\cipher'', t', \sk)$\\
      $HL_s \gets \auth_\service(h,HL_s)$\\
      $i \leftarrow i + 1$\\
      \textbf{return} $1$
    \end{queryblock}
    
    \vspace{6pt}

    \begin{queryblock}{$\oracle_\ratelimit^{\refresh}()$}
      $t' \leftarrow t' + 1$\\
      $i \leftarrow 0$\\
      $HL_s \leftarrow \emptyset$\\
      \textbf{return} $1$
    \end{queryblock}
  \end{minipage}

  \gamebottomrule
  \caption{Rate-limitability security game.}
  \footnotesize $\oracle_\ratelimit = \{\oracle_\ratelimit^\register,
                    \oracle_\ratelimit^\auth, \oracle_\ratelimit^\refresh\}$ \\
  All oracles stop and return $\bot$ when it fails (e.g., assertion fails).

  \label{fig:rate-limitability}
\end{figure}

%% file: image/unlinkability.tex
\begin{figure}[t]
  \centering
  \gametoprule

  \begin{minipage}[t]{0.47\columnwidth}
    \vspace{0pt}

    \begin{gameblock}[\linewidth]{$\game_{\adversary}^{\unlink}(\lambda)$}
      $(\pk,\sk, HL_s) \gets \setup(1^\lambda)$\\
      $t \leftarrow 0$\\
      $N \gets 0$ \\
      $i, i' \gets \adversary_1^{\oracle_\unlink}()$\\[2pt]
      \textbf{assert} $i \neq i'$ \\
      $\sigma_1 \gets \oracle_\unlink^{\auth_\user}(i)$ \\
      $0 \leftarrow \adversary_2^{\oracle_\unlink}()$\\[2pt]

      $\sigma_2 \gets\oracle_\unlink^{\auth_\user}(i)$ \\
      $\sigma_3 \gets \oracle_\unlink^{\auth_\user}(i')$ \\
      \textbf{assert} $\bot \notin \{\sigma_1, \sigma_2, \sigma_3 \}$ \\
      $b \getsR \bits$\\[2pt]
      \textbf{if $b = 0$ then} $\test \gets (\sigma_1, \sigma_2)$ \\
      \textbf{else } $\test \gets (\sigma_1, \sigma_3)$ \\
      $b' \gets \adversary_3^{\oracle_\unlink}(\test)$\\[2pt]
      \textbf{return} [ $b=b'$ ]
    \end{gameblock}

    \vspace{4pt}
    \hrule height 0.35pt
    \vspace{2pt}

    \begin{queryblock}{$\oracle_\unlink^{\register_\enclave}(\cipher)$}
      $(\cipher', \AccessPatternLeakage) \gets$\\
      $\LeakAccessPattern(\register_\enclave, (\cipher, \sk))$\\
      \textbf{return} $\cipher', \AccessPatternLeakage$
    \end{queryblock}

    \vspace{1pt}

    \begin{queryblock}{$\oracle_\unlink^{\auth_\enclave}(\cipher'', t')$}
      $(h, \AccessPatternLeakage) \gets$\\
      $\LeakAccessPattern(\auth_\enclave,$\\
      $\quad (\cipher'', t', \sk))$\\
      \textbf{return} $h, \AccessPatternLeakage$
    \end{queryblock}
  \end{minipage}%
  \hfill
  \begin{minipage}[t]{0.47\columnwidth}
    \vspace{0pt}

    \begin{queryblock}{$\oracle_\unlink^{\register_{\user, 1}}()$}
      $(\cipher, \ck) \gets \register_{\user, 1}(\pk)$\\
      $n \gets N$\\
      $CL[n] \gets \ck$\\
      $N \gets N + 1$\\
      \textbf{return} $\cipher, n$
    \end{queryblock}

    \vspace{4pt}

    \begin{queryblock}{$\oracle_\unlink^{\register_{\user, 2}}(\cipher', n)$}
      $\ck \gets CL[n]$\\
      $\cred \gets \register_{\user,2}(\cipher', \ck)$\\
      $UL[n] \gets (\cred, \cnt(=0))$\\
      $CL[n] \gets \bot$\\
      \textbf{return} $1$
    \end{queryblock}

    \vspace{4pt}

    \begin{queryblock}{$\oracle_\unlink^{\auth_\user}(n)$}
      $(\cred, \cnt) \gets UL[n]$\\
      $(\cipher'', \cnt) \gets$\\
      $\quad \auth_\user(\cred, \cnt, \pk)$\\
      $UL[n] \gets (\cred, \cnt)$\\
      \textbf{return} $\cipher'', s$
    \end{queryblock}

    \vspace{4pt}

    \begin{queryblock}{$\oracle_\unlink^{\refresh}()$}
      $t \leftarrow t + 1$\\
      \textbf{for all $(\cred, \cnt ) (\in |UL|)$} \\
      $\quad$\textbf{indexed $i$:} 
      $UL[i] \gets (\cred, 0)$\\
      \textbf{return} $1$
    \end{queryblock}
  \end{minipage}

  \gamebottomrule
  \caption{Unlinkability security game.}
  \footnotesize
    \footnotesize $\oracle_\unlink = \{
    \oracle_\unlink^{\register_{\user,1}},
  \oracle_\unlink^{\register_{\user,2}},
  \oracle_\unlink^{\register_\enclave},
  \oracle_\unlink^{\auth_\user},
  \oracle_\unlink^{\auth_\enclave}
    \}$ \\
  All oracles stop and return $\bot$ when it fails (e.g., assertion fails).
    $(y, \leakage) \leftarrow \LeakAccessPattern(f, x)$ is a function resulting $y (\leftarrow f(x))$ with memory access pattern of $f$ $\leakage$.
  \label{fig:unlinkability}
\end{figure}

%% file: text/5_design.tex
\section{Design Challenges}

\subsection{Rate-Limiting with Oblivious Access Pattern}
\label{sec:access_side_channel}

\taggedpara{Challenge}
Whether to accept a request depends on the user's past access rate.
A straightforward approach to determine this is to maintain a per-user access counter inside the server-side TEE and consult it during authentication.
However, by observing which per-user record is accessed by the TEE, the service may potentially compromise unlinkability.

Common defenses against this attack, notably linear scan~\cite{Signal} and ORAM~\cite{ORAM}, have limitations.
A linear scan must traverse the entire state on every access to hide which entry was accessed, resulting in $O(M)$ time per request and poor performance at scale.
ORAM hides access patterns by randomizing storage locations, but it incurs high latency of $O(\log^2 M)$, where $M$ is the number of entries~\cite{PathORAM, Oblix, EnigMap}. Moreover, it complicates parallelization, leading to additional scalability issues~\cite{SONIC, DUORAM}.%

Conducting per-user state lookup during authentication is not ideal.
We must, therefore, explore an alternative mechanism that satisfies several requirements simultaneously.
First, because a per-user identifier is necessary for such a mechanism to enforce rate-limiting, its value must not be exposed through control flow or memory-access patterns.
Second, the mechanism must enforce that a user can authenticate at most $K$ times within a time window.
Third, distinct authentication attempts by the same user should remain unlinkable, except when the user reuses the same rate-limiting slot, which must be detected for rate-limiting purposes.
Finally, the mechanism should be efficient and compatible with parallelization.
Satisfying all of these requirements within a server-side TEE is nontrivial.

\taggedpara{Solution}
We address this challenge by introducing a mechanism that leverages a blinded per-request identifier, which we refer to as a \emph{blind token}, shown in Figure~\ref{fig:design}.
For each authentication attempt, the Auth Enclave computes a blind token $h$ using a sufficiently long user ID $\uid$ known only to the user and the enclave in addition to a counter $\cnt \in \{0,\ldots,K-1\}$ and the current time window $t$, both supplied by the user:
\[
    h \leftarrow \hash(t, \cnt, \uid).
\]
Here, $\cnt$ and $\uid$ serve as the secret key.
Similar constructions are used in $k$-TAA~\cite{kTAA}, ARC~\cite{ARC}, Nymble~\cite{Nymble}, and APHE~\cite{PHE}.

\begin{figure}
    \centering
    \footnotesize
    \includegraphics[width=0.8\linewidth]{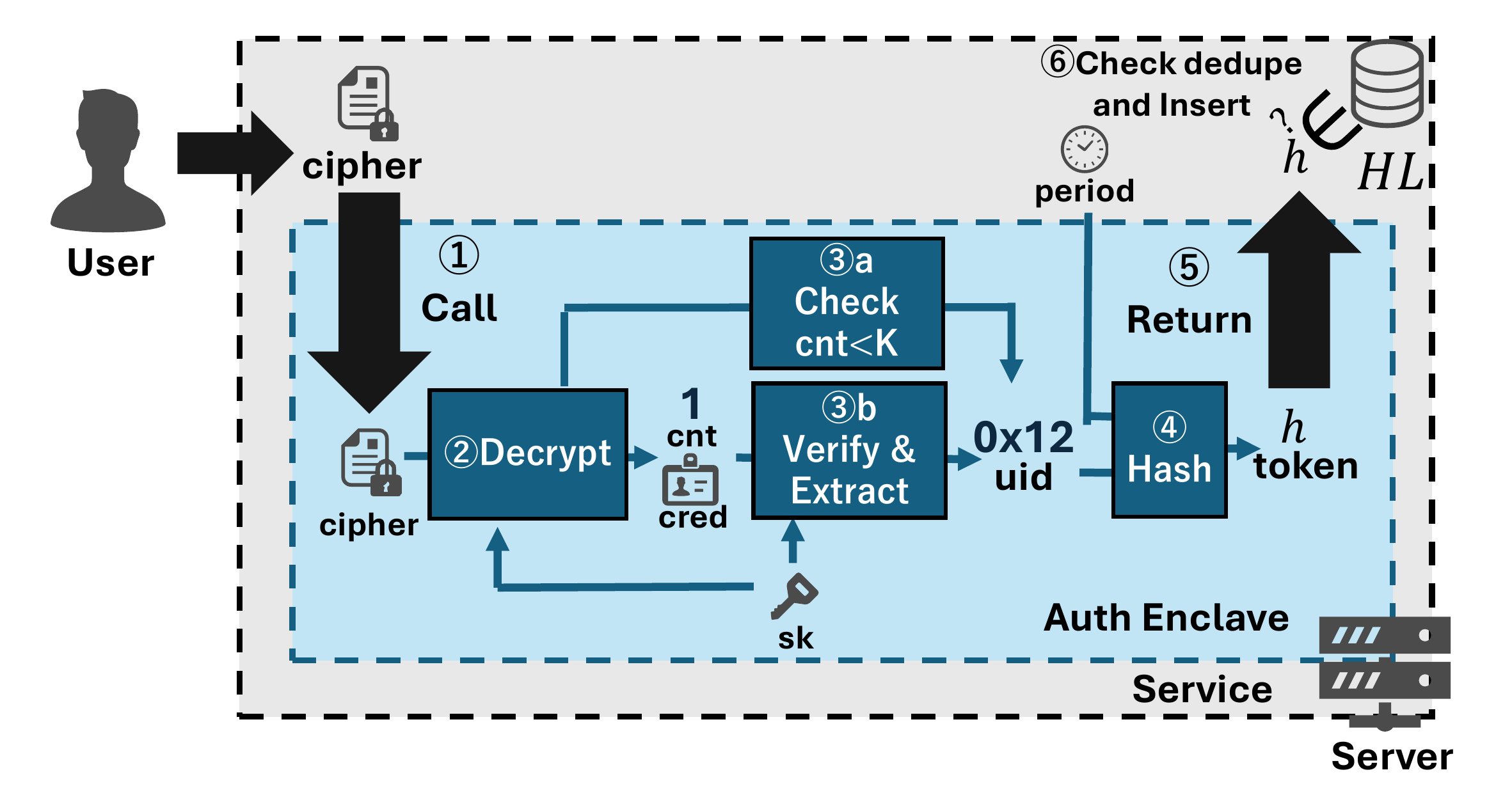}
    \caption{Blind-token mechanism of \sys{}}
    \label{fig:design}
\end{figure}

The Auth Enclave first verifies that the user-supplied counter satisfies $\cnt < K$.
A user may reuse a previously used $\cnt$ within the same time window; however, this produces the same blind token that has been redeemed before and is therefore rejected by the service.
Any $\cnt$ values outside the permitted range are rejected by the enclave.
Thus, at most $K$ distinct blind tokens can be successfully redeemed by a user within each time window.
Once this check passes, the Auth Enclave generates $h$.

After generating $h$, the Auth Enclave returns it to the service, which checks whether the token has already been redeemed within the current time window.
Specifically, the service maintains a list of already-redeemed blind tokens and performs duplicate detection over this list.
The use of blind token and delegation of rate-limiting to the service introduces the following two advantages.
First, because $h$ is calculated using a one-way function, the service cannot recover the user identifier from $h$.
Second, because the enclave does not perform rate limiting itself, it does not access user-specific state, removing the need for access-pattern-hiding approaches.
    This allows us to employ efficient $O(1)$ look-up data structures such as hash sets.

Because $h$ is pseudorandom and varies with $\cnt$ and $t$, authentication attempts by the same user are unlinkable across distinct $(\cnt,t)$ pairs.
Moreover, we implement blind-token generation and the $\cnt$ range check with secret-independent branches and memory-access patterns.
Enclave-side processing is limited to credential verification, counter checking, and token generation.
This keeps the enclave codebase small and makes the access-pattern analysis in Section~\ref{sec:security} manageable.

There are several works that leverage blind tokens (or similar approaches), although their techniques cannot be applied directly to \sys{} due to limitations in functionality or performance.
For example, APHE~\cite{PHE} uses a blinded value to limit login attempts without exposing the account name. However, it does not enforce a per-user limit on service access \emph{after} a successful login, which is \sys{'s} target. %
ZIPNet~\cite{ZIPNet} utilizes blinded messages to construct an anonymous broadcasting system.
    Although ZIPNet also introduces some form of rate-limiting functionality, applying their approach to the \sys{} use-case (e.g., WAAP) would introduce extremely high bandwidth consumption since they require all users to transmit a real or cover message for anonymity purposes.
Blinded tokens leveraged in k-TAA~\cite{kTAA} and ARC~\cite{ARC} are computed on the client side.
    Due to this, clients must prove possession of the correct identifier without revealing the identifier itself, which requires computationally expensive zero-knowledge proofs.
    This also requires the algorithm to use hash functions optimized for zero-knowledge proofs, which is slower compared to general-purpose CPU-optimized ones such as SHA-256, which is used by \sys{}.

\subsection{Distributing $HL$ for Scalability}
\label{sec:sharding}

\taggedpara{Challenge}
In the previous sections, we described how we can increase the scalability of \sys{} by avoiding expensive cryptographic algorithms and data structures that hide access patterns.
To further increase its scalability, \sys{} should be deployed in a distributed setting.
    There are two key information that must be distributed; $\sk$ stored within the Auth Enclaves (discussed in Section~\ref{sec:impl:key-share}) and $HL$.
One way of maintaining $HL$ in a distributed setting is to run \sys{} on top of a distributed database (e.g., distributed hash table~\cite{stoica2001dht}) to leverage its atomicity guarantee and prevent race conditions.
However, this approach can incur performance bottlenecks due to mutex locking, data merging, and data transfer in the service's intranet.
Unlike prior work in which cryptographic processing is the primary bottleneck, this issue is especially significant in \sys{}, where per-request processing is expected to complete in the order of microseconds.

\taggedpara{Solution}
To address this issue, we employ the \sys{} protocol with partition $HL_s$ on a per-server basis (i.e., sharding).
Specifically, when the user computes the blind token $h$ from $\uid (\in \cred)$, it also computes $s$, which is the remainder obtained after dividing $h$ by the number of servers $S$ and routes the request to the server corresponding to $s$.
With this approach, each server maintains only the subset of $h$ values, corresponding to approximately $1/S$ of $HL$, and, therefore, user requests are distributed approximately evenly at a rate of $1/S$, assuming that $h$ is uniformly distributed.

Several security considerations must be taken into account during the design of this solution.
First, a malicious user may send a request to a server that does not correspond to $s$. %
    However, the server can always reject this request by recomputing $s$ and verifying whether it is assigned to that shard.
Second, a malicious server may abuse this feature to track users via the routing value.
    Such tracking is computationally infeasible because the current $s$ does not provide any leverage to a malicious server in predicting the next $s$, since $h$ is computed using a Hash.
Finally, a malicious user may attempt to take down a specific server by selecting invalid routes and repeatedly accessing a single server.
    However, the impact of this attack on \sys{} is limited. This is because, owing to the sharding approach, even if one server becomes unavailable due to the above attack, the number of affected user requests is limited to only $1/S$ of the total number of requests, assuming $S$ is sufficiently large.
    Moreover, even if a user is affected, the impact remains limited, because the user can always retransmit the request using another counter, as long as $K$ is sufficiently large and the epoch duration is short.

\begin{figure}
    \centering
    \footnotesize
    \includegraphics[width=0.75\linewidth]{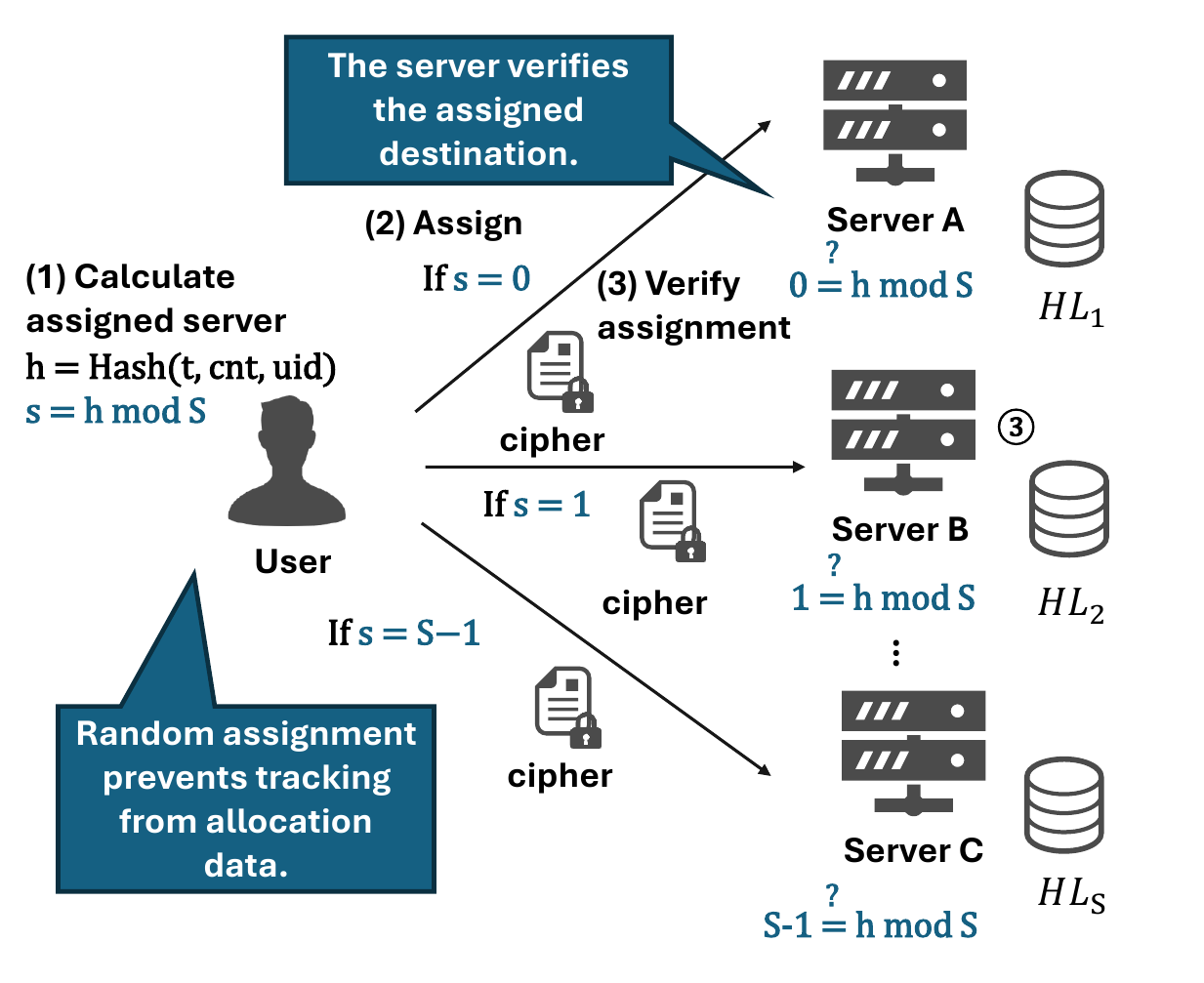}
    \caption{The sharding mechanism of \sys}
    \label{fig:sharding}
\end{figure}

%% file: text/6_protocol.tex
\section{The \sys{} protocol}
\label{sec:protocol}

This section details the \sys{} protocol.
To clearly present the steps of the protocol, we utilize the notation defined in Table~\ref{tab:notation} (in Appendix~\ref{sec:notation}) and show the protocol in Figures~\ref{fig:registration} and \ref{fig:authentication}.
For the underlying cryptographic operations, we assume that all variables are securely padded to fixed lengths, and that the secret key of the enclave $\sk$, and the common key between users and the enclave $\ck$ and the identifiers $\uid$ are of sufficient length (e.g., 256 \, bits) to prevent brute-force attacks.
Furthermore, we let $\mac$ denote a cryptographic MAC, $\hash(\cdot)$ a hash function, $\enc$ and $\dec$ AEAD sealing and unsealing, respectively, and $\enc'$ and $\dec'$ public-key encryption and decryption, respectively.
Additionally, $\pk$ represents the public key of the Auth Enclave, which the client is assumed to have verified a priori via remote attestation.
Moreover, although we use $\sk$ to denote all secret keys of the Auth Enclave, the keys for HMAC and decryption should be distinct in the implementation.
Finally, after each transition of the time window, the user's session counter $\cnt$ is reset to $0$ and the duplicate-check list $HL_s$ is flushed ($HL_s = \emptyset$).

\input{image/register}
\input{image/auth}

%% file: image/register.tex
\begin{figure}[t]
    \centering
    \begin{msc}[
        small values,
        draw frame=none,
        msc keyword=,
        head top distance=4mm,
        foot distance=0mm,
        instance distance=7.5mm,
        environment distance=8mm,
        level height=1.6mm,
        arrow scale=1.6,
        instance width=18mm,
        action width=34mm,
        action height=3mm,
        left inline overlap=11mm,
        right inline overlap=11mm,
        /tikz/font=\scriptsize,
        /tikz/line width=0.35pt,
        /msc/line width=0.35pt,
        label distance=0.3ex,
        every action/.style={align=left}
    ]{}

        \declinst{u}{}{User}
        \declinst{s}{}{Service}
        \declinst{ae}{}{Auth Enclave}

        \action{
            $(\cipher,\ck) \gets \register_{\user,1}(\pk)$\\[-0.3ex]
            \rule{30mm}{0.3pt}\\[-0.3ex]
            1.~Generate random $\ck$\\
            2.~Compute\\[-0.3ex]
            $\cipher = \enc'(\ck,\pk)$
        }{u}

        \nextlevel[11]

        \mess{Send $\identity,\cipher$}{u}{s}

        \nextlevel[1]

        \action{
            \textbf{Validate identity}\\[-0.3ex]
            \rule{30mm}{0.3pt}\\[-0.3ex]
            1.~Check $\identity \notin IL$\\
            2.~Store $\identity$ in $IL$
        }{s}

        \nextlevel[11]

        \mess{\shortstack{Request credential\\provisioning with $\cipher$}}{s}{ae}

        \nextlevel[1]

        \action{
            $\cipher' \gets \register_\enclave(\cipher,\sk)$\\[-0.3ex]
            \rule{30mm}{0.3pt}\\[-0.3ex]
            1.~Generate random $\uid$\\
            2.~Compute $\code = \mac(\uid,\sk)$\\
            3.~Form $\cred = (\code,\uid)$\\
            4.~Decrypt $\ck = \dec'(\cipher,\sk)$\\
            5.~Compute $\cipher' = \enc(\cred,\ck)$
        }{ae}

        \nextlevel[22]

        \mess*{Return $\cipher'$}{ae}{s}

        \nextlevel[1]

        \mess*{Forward $\cipher'$}{s}{u}

        \nextlevel[1]

        \action{
            $\cred \gets$\\[-0.3ex]
            $\register_{\user,2}(\cipher',\ck)$\\[-0.3ex]
            \rule{30mm}{0.3pt}\\[-0.3ex]
            Decrypt $\cred = \dec(\cipher',\ck)$
        }{u}

        \nextlevel[7]

    \end{msc}

    \caption{Registration protocol.}
    \label{fig:registration}
\end{figure}

%% file: image/auth.tex
\begin{figure}[t]
    \centering
    \begin{msc}[
        small values,
        draw frame=none,
        msc keyword=,
        head top distance=4mm,
        foot distance=0mm,
        instance distance=7.5mm,
        environment distance=8mm,
        level height=1.6mm,
        arrow scale=1.6,
        instance width=18mm,
        action width=34mm,
        action height=3mm,
        left inline overlap=11mm,
        right inline overlap=11mm,
        /tikz/font=\scriptsize,
        /tikz/line width=0.35pt,
        /msc/line width=0.35pt,
        label distance=0.3ex,
        every action/.style={align=left}
    ]{}

        \declinst{u}{}{User}
        \declinst{s}{}{Service ($s$)}
        \declinst{ae}{}{Auth Enclave ($s$)}

        \action{
            $(\cipher'', s, \cnt) \gets$\\[-0.3ex]
            $\auth_\user(\cred,\cnt,\pk)$\\[-0.3ex]
            \rule{30mm}{0.3pt}\\[-0.3ex]
            1.~Get current time window $t$\\
            2.~Check $\cnt < K$\\
            3.~Compute\\[-0.3ex]
            $\cipher'' \gets \enc'((t,\cnt,\cred),\pk)$\\
            4.~Compute\\[-0.3ex]
            $s \gets \hash(t, \cnt, \uid) \bmod S$\\
            5.~Increment $\cnt \leftarrow \cnt + 1$
        }{u}

        \nextlevel[25]

        \mess{\shortstack{Send $\cipher''$ to\\service indexed $s$}}{u}{s}

        \nextlevel[4]

        \mess{\shortstack{Forward $\cipher''$ with\\current time window $t'$}}{s}{ae}

        \nextlevel[1]

        \action{
            $h \gets \auth_\enclave(\cipher'', t', \sk)$\\[-0.3ex]
            \rule{30mm}{0.3pt}\\[-0.3ex]
            1.~Decrypt $\cipher''$ to obtain\\
            $(t,\cnt,\cred)=\dec'(\cipher'',\sk)$\\
            2.~Parse $\cred$ into $(\code,\uid)$\\
            3.~Verify $\code \isequal \mac(\uid,\sk)$\\
            4.~Check $\cnt < K$ and $t \isequal t'$\\
            5.~Compute $h \gets \hash(t, \cnt, \uid)$
        }{ae}

        \nextlevel[23]

        \mess*{Return $h$}{ae}{s}

        \nextlevel[1]

        \action{
            $HL_s \gets \auth_\service(h,HL_s)$\\[-0.3ex]
            \rule{30mm}{0.3pt}\\[-0.3ex]
            1.~Check $s \gets h \bmod S$\\
            2.~Check $h \notin HL_s$\\
            3.~Record $h$ in $HL_s$
        }{s}

        \nextlevel[11]

        \mess*{Authentication success}{s}{u}

    \end{msc}

    \caption{Authentication protocol.}
    \label{fig:authentication}
\end{figure}

%% file: text/7_implementation.tex
\section{Implementation}
To demonstrate its feasibility, we built an end-to-end proof-of-concept prototype of \sys{} in Rust, consisting of the Core Library, Rate-Limiting Service, protected Web App Server, and Client-side Proxy (see Fig.~\ref{fig:impl} for the overview). 

During the implementation, we made several assumptions.
We set the time-window length at one minute and $K$ to $100$ as the default setting.
Additionally, we abstracted the ID-verification mechanism since it can be realized through various means (e.g., TPM-backed certificates).

\begin{figure}
    \centering
    \footnotesize
    \includegraphics[width=0.8\linewidth]{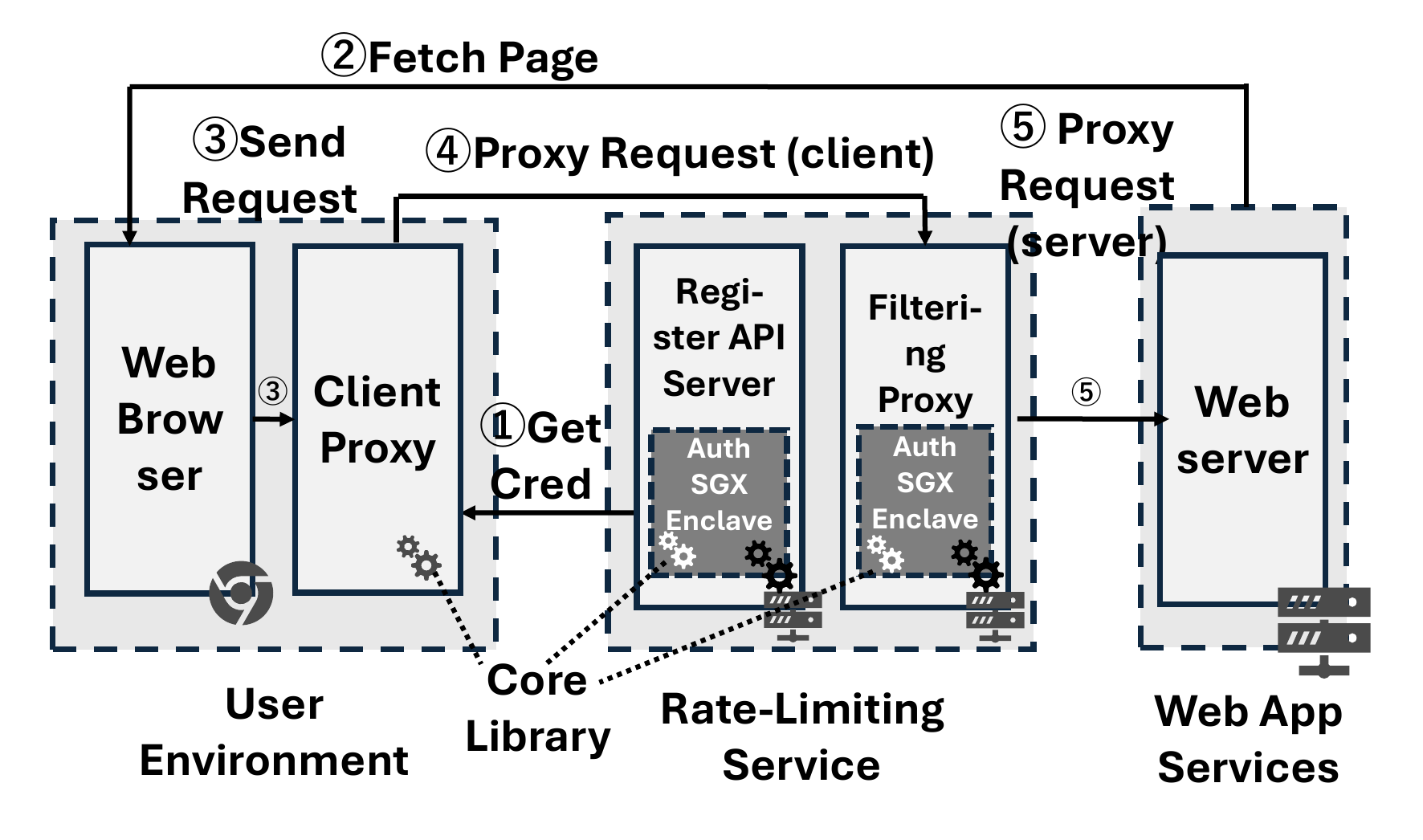}
    \caption{Implementation of \sys}
    \label{fig:impl}
\end{figure}

\paragraph*{Challenge and Solution: How to Deliver Ciphertexts}
Although \sys{} requires users to transmit $\cipher''$, determining the optimal transmission method is not trivial.
The most straightforward way is to encapsulate it in the HTTP header.
However, this allows the server to obtain $\cipher''$ only after the TLS handshake, so the risk of TLS flooding remains.

To address this issue, we use a TLS extension to deliver $\cipher''$.
This way, the service obtains $\cipher''$ when it receives the TLS ClientHello message, allowing it to filter user requests even before completing the TLS handshake.

However, the implementation choice described above introduces an additional risk; intermediary entities between users and the service, e.g., ISPs, may intercept $\cipher''$, allowing them to block user requests and/or send them as their own requests.
Although this is a limitation of our implementation, the impact of such attacks is limited.
    In the case of malicious blocking, users can detect the event (e.g., by not receiving a valid response), and resend $\cipher''$ via a different network route using the next counter value.
    Although this consumes an additional $\cnt$ value, the impact is minimal if $K$ is sufficiently large or the epoch length is short.
    Moreover, sending an intercepted request does not affect the rate-limiting functionality of \sys{}, as it does not increase the total number of requests that can be submitted to a service. This is because the intercepted request must be generated by a user who received a legitimate credential from the service, which controls the number of credentials it issues depending on the load it can handle.

\subsection{Core Library}
The Core Library is responsible for the main cryptographic operations.
This library covers most of the processes of the Auth Enclave and the user shown in Section~\ref{sec:protocol}, except for I/O-related processes (e.g., checking storage and sending data over the Internet).

Implementing this component required several technical design choices.
    First, we had to select cryptographic primitives that met the requirements set in Section~\ref{sec:system}.
    Specifically, we adopt SHA-256 for the hash function and HMAC-SHA256 for the MAC.
    For communication between the Auth Enclave and the user, we use a hybrid design in which HPKE is used to deliver an AEAD key (together with other messages), after which an AEAD is used.
    As for the primitives underlying these constructions, we adopt X25519 and ChaCha20-Poly1305, which are well suited to implementations with secret-independent control flow and memory access.
    We use implementations designed to avoid secret-dependent branches and memory accesses~\footnote{https://github.com/symbolicsoft/hpke-ng}~\footnote{https://docs.rs/chacha20poly1305/0.10.1/chacha20poly1305/}~\footnote{https://docs.rs/hmac/0.12.1/}.

Additionally, we implement the Core Library so that control flow and memory-access patterns do not depend on secret information.
    Aside from external cryptographic libraries, which we assume do not introduce secret-dependent branches or memory accesses, the only operation that required explicit handling was byte-string comparison.
    Therefore, we implemented all byte-string comparisons using a linear scan over all bytes.

\subsection{Rate-Limiting Service}
The Rate-Limiting Service is responsible for blocking excessive requests from reaching the Web App Server (described below), which consists of (i) a Registration API Server that provides the registration procedure, (ii) a Filtering Proxy that performs authentication, and (iii) an Auth SGX Enclave, which is an implementation of Auth Enclave.

\taggedpara{Registration API Server}
This server handles user registration. It relays the user's registration request to the Auth Enclave and returns the encrypted credential produced by the enclave to the user.
For compatibility with Pingora, we adopted Pingora Web for this implementation.

\taggedpara{Filtering Proxy}
This proxy is responsible for authentication.
Upon receiving a user request, the Filtering Proxy invokes the Auth Enclave and checks whether the blind token returned by the enclave has not been used before using a HashMap.
Taking inspiration from WAAP practices, we leverage a proxy design for the \sys{} authentication layer.
For the actual implementation, we adopted Pingora~\footnote{https://blog.cloudflare.com/pingora-open-source/} due to its scalability and Cloudflare's deployment track-record. The duplicate check and insertion into the HashMap are performed atomically under a mutex to prevent race conditions between concurrent requests.

One challenge in implementing the core proxy is extending TLS.
To address this challenge, we use OpenSSL to take advantage of its useful hook for validating custom extensions, namely \texttt{TlsSettings::set\_client\_hello\_callback}~\footnote{\label{fn:ssl-acceptor-builder}https://docs.rs/openssl/latest/openssl/ssl/struct.SslAcceptorBuilder.html}~\footnote{https://docs.rs/pingora/latest/pingora/listeners/tls/struct.TlsSettings.html}.

\taggedpara{Auth SGX Enclave}
\label{sec:impl:key-share}
This is the implementation of the \sys{} Auth Enclave.
    This provides the Filtering Proxy with an interface for invoking Auth Enclave functions implemented in the Core Library.
    Additionally, it provides the features of attesting $\pk$ and sharing $\sk$ between enclaves after mutual attestation~\footnote{This is done via a dedicated key-managing enclave.}, which is necessary for realizing the $HL$ sharding design discussed in Section~\ref{sec:sharding}.
As the execution environment for the Auth Enclave, we adopted Intel SGX because it provides RA and ME while keeping the TCB small.
    Furthermore, to preserve the Rust implementation while minimizing the TCB, we used Rust SGX and implemented the enclave as an externally callable library.
    This design places many operations, including communication handling, outside the enclave, further reducing the TCB.

\subsection{Web App Server}
The Web App Server returns HTML pages to users, and is protected by the Rate-Limiting Service.

\subsection{Client-side Proxy}
The Client-side Proxy enables the user's browser to access the Web App Server.
When a user sends a request through this proxy, the proxy increments the counter $\cnt$, then generates and attaches $\cipher''$ to the TLS extensions. For this implementation, we use OpenSSL, its hook function for TLS extension \texttt{SslConnectorBuilder::add\_custom\_ext}\footnote{https://docs.rs/openssl/latest/openssl/ssl/struct.SslConnectorBuilder.html}, and Pingora to align with the Filtering Proxy implementation.

%% file: text/8_evaluation.tex
\section{Performance Evaluation}

In this section, we evaluate the performance of \sys{}.
All experiments were conducted on a Standard DC24s v3 Azure VM, which has 24 CPU cores running at a frequency of 2.8 GHz, and 196 GB of memory.

\subsection{Latency/Throughput Evaluation}
\subsubsection{Methodology}
We measure the computational latency and throughput of \sys{} as well as two prior systems and compare the results.
    As for the prior works, we chose ARC~\cite{ARC}~\footnote{https://github.com/cloudflareresearch/arc} (under the $K=100$ setting) and Scrappy~\cite{Scrappy}, as they are recent anonymous rate-limiting systems whose code is open source.\footnote{Scrappy uses akakou/ecdaa; we used its Rust implemented version.}
    We also chose these systems because they are secure against colluding authorities (cf. Section~\ref{sec:related-work})~\footnote{Our measurements for each scheme exclude the cost of duplication check from each total latency, as they do not provide the same implementation of $HL_s$ as \sys{}.
    This setting favors the prior systems and is unfavorable to \sys{}.}.

\subsubsection{Results}
\taggedpara{\sys{} latency and throughput}
Tables~\ref{tab:baseline-latency-client}, \ref{tab:baseline-latency-server}, and \ref{tab:prior-compare} show our measured results, which confirm that \sys{} operates at high performance.
User-side latency is $\clientSumLatency$ while service-side latency is $\serverSumLatency$ (+ $\dedupeCheckLatency$ for duplicate checking and inserting).
    We refer the reader to Tables~\ref{tab:baseline-latency-client} and \ref{tab:baseline-latency-server} for a breakdown of the client- and server-side latencies, respectively.
Furthermore, \sys{} achieves $\totalThroughput{}$ (connections per second) for verification, demonstrating its scalability.

\taggedpara{Comparison evaluation}
We demonstrate that \sys{} is more efficient than prior work.
More concretely, under the above conditions, the proving latencies of ARC and Scrappy are $\arcSignLatency$ and $\scrappySignLatency$, respectively.
Verification latency is $\arcVerifyLatency$ and $\scrappyVerifyLatency$ for ARC and Scrappy, respectively, which leads to a throughput of $\arcThroughput$ and $\scrappyThroughput$ for ARC and Scrappy, respectively.
Based on the measurement numbers shown above, the proving latency of \sys{} is approximately \clientFasterThanExisting$\times$ or more faster than the two related works and \serverFasterThanExisting$\times$ or more efficient for verification.

\begin{table}[t]
    \centering
    \caption{Client-side \sys{} latency.}
    \begin{tabular}{|l|c|} \hline
        Process  & Latency  \\ \hline \hline
        Encryption & {\encryptLatency} \\ \hline
        Route selection & {\routeChooseLatency}\\ \hline \hline
        Total & \clientSumLatency \\ \hline
    \end{tabular}
    \label{tab:baseline-latency-client}
\end{table}

\begin{table}[t]
    \centering
    \caption{Server-side \sys{} latency. Parenthesized text indicates duplicate-check and inserting process latencies.}
    \begin{tabular}{|l|c|} \hline
        Process  & Latency  \\ \hline \hline
        Decryption & {\decryptLatency}\\ \hline
        Verification & {\verificationLatency}\\ \hline
        Token generation & {\tokenGenerateLatency}\\ \hline
        Route selection & {\routeChooseLatency}\\ \hline
        Call Enclave \& Return & \callLatency \\ \hline 
        Others & \otherLatency \\ \hline \hline
        Total & \serverSumLatency{} ( +\dedupeCheckLatency) \\ \hline 
    \end{tabular}
    \label{tab:baseline-latency-server}
\end{table}

\begin{table}[t]
    \centering
    \newcolumntype{P}[1]{>{\centering\arraybackslash}p{#1}}
    \caption{Latency and throughput comparison with prior works. cps = connections per second.
    }
    \begin{tabular}{|c|P{4em}|P{5em}|c|}  \hline
        & Proving Latency & Verification Latency & Throughput \\ \hline \hline
        ARC & \arcSignLatency & \arcVerifyLatency & \arcThroughput \\ \hline
        Scrappy & \scrappySignLatency & \scrappyVerifyLatency & \scrappyThroughput \\ \hline
        \sys{} & \clientSumLatency & \serverSumLatency & \totalThroughput{}  \\ \hline
    \end{tabular}
    \label{tab:prior-compare}
\end{table}

\subsection{Filtering Performance Evaluation}
\subsubsection{Methodology}
We evaluate the latency and throughput of the filtering process of the filtering proxy.
    Recall that in \sys{}, this proxy is responsible for filtering out any invalid requests \emph{before} performing the TLS handshake to mitigate TLS flooding attacks.
Therefore, in this evaluation, we focus only on the computational cost of receiving TLS ClientHello messages (i.e., \texttt{SslAcceptor::accept}~\footnote{https://docs.rs/openssl/latest/openssl/ssl/struct.SslAcceptor.html}) with a custom extension configured to use \texttt{TlsSettings::set\_client\_hello\_callback}~\footnote{See footnote~\ref{fn:ssl-acceptor-builder}.}. %
    In detail, we investigate the following three potential attack scenarios: (i) a missing \sys{} TLS extension header; (ii) using $\cipher''$ generated from an invalid credential; and (iii) replaying  $\cipher''$.
We also measure the cost of establishing a TLS connection and compare that with the above measurements.

\subsubsection{Results}
Our experiments show that \sys{} provides sufficient throughput to mitigate the attack described above.
Table~\ref{tab:base-compare} summarizes the results.
The filtering proxy achieves (i) $\tlsMissExtensionThroughput{}$, (ii) $\tlsInvalidCredThroughput{}$, and (iii) $\tlsReplayCredThroughput{}$ for the three scenarios.
This makes \sys{} \throughputLargerThanTLS{} times more scalable than the measured TLS handshake, which was \tlsThroughput{}.
Additionally, Figures~\ref{fig:each-attack} and \ref{fig:vs-tls} show that the throughput of \sys{} increases linearly.
    We can also observe the same trend in an environment with multiple \sys{} servers.

\begin{table}[t]
    \centering
    \caption{Latency and throughput comparison with TLS under a single thread setting. cps = connections per second.
    }
    \begin{tabular}{|c|c|c|c|c|}  \hline
         & TLS  & (i) & (ii) & (iii) \\ \hline 
        Latency  &  \tlsLatency & \tlsMissExtensionLatency & \tlsInvalidCredLatency & \tlsReplayCredLatency \\ 
        \hline
        Throughput  & \tlsThroughput & \tlsMissExtensionThroughput{} & \tlsInvalidCredThroughput{} & \tlsReplayCredThroughput{}  \\ \hline
    \end{tabular}
    \label{tab:base-compare}
\end{table}

\begin{figure}[t]
    \centering
    \includegraphics[width=0.7\linewidth]{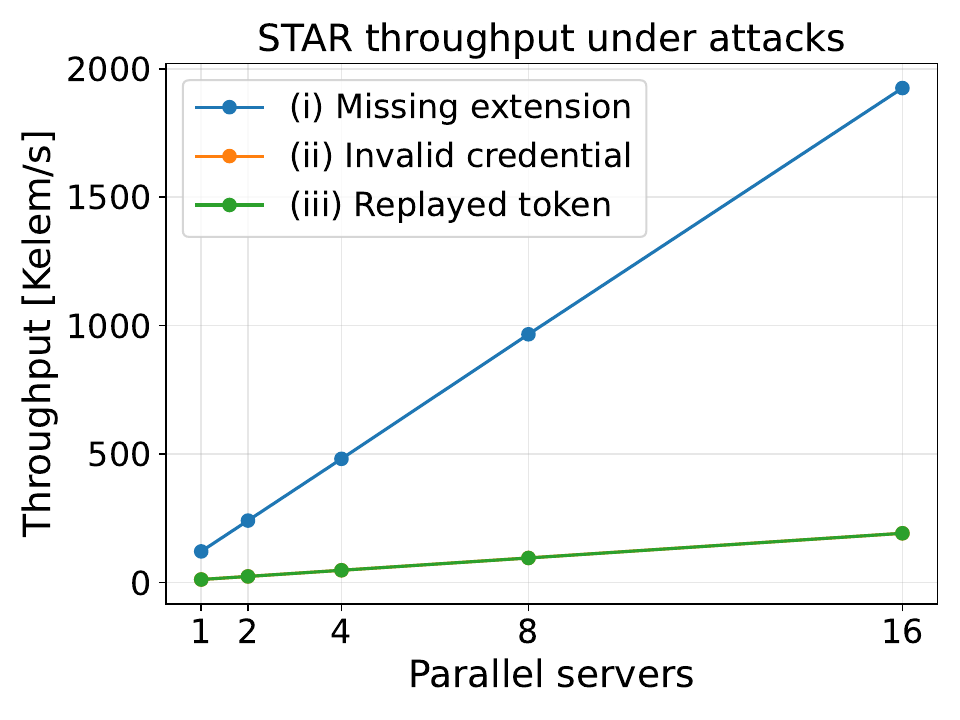}
    \caption{Throughput comparison for each attack. The line of (iii) overlays (ii).}
    \label{fig:each-attack}
\end{figure}

\begin{figure}[t]
    \centering
    \includegraphics[width=0.7\linewidth]{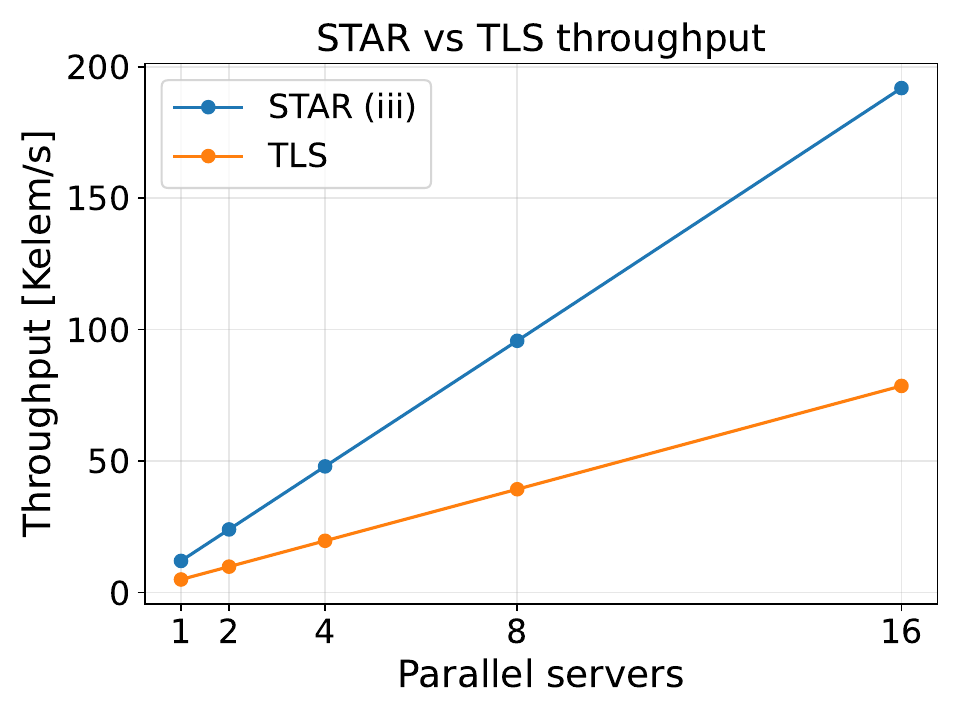}
    \caption{Throughput comparison with responding to TLS ClientHello messages}
   
    \label{fig:vs-tls}
\end{figure}

\subsection{Other Costs}
\subsubsection{Bandwidth Consumption}
We measured the bandwidth consumption of the authentication protocol.
The one-way overhead of \sys{} during authentication is \payloadSize{}.
In comparison, the bandwidth consumption of ARC and Scrappy is $1248 (= 352 + 128 \times \lceil\log_2 (100)\rceil)$ \si{\byte} and \SI{261}{\byte}, respectively, making the communication overhead of \sys{} lower than that of previous works.

\subsubsection{Memory Consumption}
We now examine the memory consumption of \sys{}.
From Section~\ref{sec:protocol} we can see that the storage size of $HL_s$ increases linearly.
To empirically evaluate the memory consumption of $HL_s$ in a modeled real-world scenario, we consider a Wikipedia-scale deployment.
Concretely, we assume approximately 2,400 servers\footnote{{https://wikitech.wikimedia.org/wiki/Servers}} and 22.6 billion accesses per month\footnote{https://wikimedia.org/api/rest\_v1/metrics/pageviews/aggregate/all-projects/all-access/all-agents/monthly/2026040100/2026043000}, with the assumption that 1\% of the servers are allocated for \sys{-related} tasks.
In this setting, we estimate the number of entries in $HL_s$ for a single time window to be $\HLSize$ $\left(=22.6~\text{billion} \div (0.01 \times 2400 \times 30 \times 24 \times 60)\right)$, which totals approximately \HLStorage{}.
This is practical and because $HL_s$ is flushed in a batch process when $t$ turns over, the memory overhead is constant throughout the operation and therefore remains within a practical operating range.

%% file: text/9_security.tex
\section{Security Analysis}
\label{sec:security}
In this section, we analyze the security of the \sys{} protocol.
We first conduct an access-pattern leakage analysis for any secret-dependent control flow or memory accesses and then, based on its results, provide proof sketches and full proof in Appendix~\ref{sec:full-security-proof} showing that the \sys{} protocol satisfies the security requirements defined in Section~\ref{sec:system}.

\subsection{Access-Pattern Leakage Analysis}

\subsubsection{Experimental Setup}
We use a dynamic analysis tool to validate that our Core Library (and its dependencies) does not introduce secret-dependent branches or secret-dependent memory accesses. 
    Note that in this evaluation, we evaluate the Core Library functionality related to the \sys{} protocol \emph{outside} the SGX environment, where the required analysis tools are available. 
    Therefore, SGX-specific functionalities that require an SGX environment, including remote attestation and sealing, remain outside the scope of this analysis.

This work adopts a Timecop-style~\cite{Timecop} methodology. %
This approach repurposes Valgrind's mechanism for identifying conditional jumps and memory accesses that depend on uninitialized values to detect those that depend on secret values.
Valgrind performs this analysis at byte-level granularity, which is finer than cache-line or page-level access observations and therefore also covers these coarser access granularities.
This work uses crabgrind, a Valgrind wrapper.

In our experiment, we treat all inputs for the enclave (e.g., $\sk$, $t$, $\cipher$, $\cipher''$) as the traceable variables and execute setting up, credential issuance, credential verification, and token generation as our target functions.
Although we only set the above values, Valgrind automatically traces all values generated from them, i.e., $\uid$, $\cnt$ and $\cred$. We then inspect the conditional branches and access patterns within the Core Library that depend on these secret values. 

Note that these checks can raise false alarms, i.e., secret-dependent branches that do not affect security.
For example, even if a user supplies an invalid ciphertext, credential, or ephemeral key to an error-handling branch, the branch leaks that the input was invalid, which is acceptable in our setting.
We therefore manually inspect each reported conditional branch and its corresponding code, and confirm that the branch does not indeed leak sensitive information beyond what the adversary already legitimately learns.

\subsubsection{Experimental Results}
\label{sec:dynamic-analysis-results}
We did not identify security-relevant secret-dependent memory accesses or branches.
Although Valgrind reported several secret-dependent conditional branches or access-pattern changes, our manual inspection of the relevant functions confirmed that all reported cases were false positives.
The list of functions and their accompanying inspection result is shown in Appendix~\ref{sec:valgrind-results}.

\subsection{Cryptographic Security Analysis}
\label{sec:crypto-analysis-sketch}

\begin{theorem}[Rate-Limitability of the \sys{} protocol]
Under the random oracle model, assuming IND-CCA security of AEAD and public-key encryption, and EUF-CMA security of the MAC scheme, \sys{} satisfies rate-limitability. 
\end{theorem}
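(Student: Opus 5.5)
We prove the theorem by a short sequence of hybrid games starting from $\game_\adversary^\ratelimit$ in Figure~\ref{fig:rate-limitability}. The goal is to reach a game in which every successful call to $\oracle_\ratelimit^\auth$ can be charged to a distinct pair $(\uid,\cnt)$, with $\uid$ one of the $N$ identifiers generated by $\register_\enclave$ and $\cnt\in\{0,\dots,K-1\}$. This pair also determines the blind token $h=\hash(t',\cnt,\uid)$ and the shard $s=h \bmod S$. Once that holds, a counting argument shows that within one time window the counter $i$ of successful authentications is at most $K\times N$.

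\textbf{Game 0} is the original game. In \textbf{Game 1} we abort if $\auth_\enclave$ ever accepts a credential $(\code,\uid)$ where $\uid$ was never output by $\register_\enclave$ but $\code=\mac(\uid,\sk)$ verifies. A reduction to EUF-CMA of $\mac$ handles this. The reduction simulates $\register_\enclave$ using its MAC oracle on freshly sampled $\uid$'s and decrypts $\cipher$ and $\cipher''$ itself, holding the decryption part of $\sk$. This relies on the paper's remark that the MAC and decryption keys are distinct. Any accepted credential with a fresh $\uid$ is then a forgery. The role of IND-CCA of $\enc'$ and $\enc$ here is mainly bookkeeping: since the adversary controls the users, confidentiality of $\cred$ toward the adversary is not needed for rate-limiting. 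If a hybrid is needed, we would replace ciphertexts created by honest parties with encryptions of zeros, answering decryption queries via the CCA oracle, to argue that the adversary learns nothing about $\sk$. We expect this step to be vacuous or nearly so, and we would say so explicitly rather than force a reduction.

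In \textbf{Game 2} we model $\hash$ as a random oracle and abort on any collision $\hash(t',\cnt,\uid)=\hash(t',\cnt',\uid')$ with $(\cnt,\uid)\neq(\cnt',\uid')$ among queried points. By the birthday bound, the abort probability is at most $q^2/2^{\lambda}$, where $q$ is the number of oracle queries. In Game 2 the map $(\uid,\cnt)\mapsto h$ is injective within a window, so distinct accepted tokens correspond to distinct admissible pairs.

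The counting step goes as follows. $\auth_\enclave$ enforces $\cnt<K$ and $t=t'$. $\auth_\service$ enforces $s=h\bmod S$ and $h\notin HL_s$. Since $s$ is a function of $h$, each $h$ can be inserted into exactly one $HL_s$, so it is accepted at most once across all shards in a window. This is exactly the sharding argument of Section~\ref{sec:sharding}. The main obstacle is the game's bookkeeping: $i$ is incremented on every call, but the figure's convention is that oracles return $\bot$ and abort on failure. We interpret $i$ as counting only successful authentications. Under that reading the successes in a window are at most $|\{\uid\}|\cdot K\le N K$, so the adversary's winning probability in Game 2 is $0$. Summing the hybrid losses gives $\Pr[\game_\adversary^\ratelimit(1^\lambda)=1]\le \advantage^{\mathrm{euf\text{-}cma}}_{\mac}+q^2/2^{\lambda}\ (+\text{CCA terms})=\negl(\lambda)$.
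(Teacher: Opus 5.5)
Your argument is correct. It rests on the same ingredients as the paper's proof: unforgeability of $\mac$ on fresh $\uid$'s, the enclave's checks $\cnt<K$ and $t=t'$, the shard check $s=h \bmod S$, duplicate detection in $HL_s$, and the final $K\times N$ count (which is the paper's ``replay'' game). You organize these ingredients differently, though.

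The paper's first hybrid appeals to IND-CCA of both encryption schemes to reduce ciphertext-level manipulation to choosing plaintexts. You instead observe that the adversary plays every user, so nothing needs to be kept confidential from it. The EUF-CMA reduction can therefore generate the public-key pair itself, decrypt every $\cipher$ and $\cipher''$ directly, and obtain tags for newly registered $\uid$'s from its MAC oracle. This is exactly where the separation of MAC and decryption keys is used. Your route gives a sharper statement: the CCA assumptions are not needed for rate-limitability.

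Your collision-abort hybrid is also unnecessary, and with it the random oracle. Because $s$ is determined by $h$ and $h\notin HL_s$ is enforced, the successful authentications in one window insert pairwise distinct tokens. Each lies in the image of the at most $NK$ admissible pairs $(\uid,\cnt)$. Collisions can only shrink that image, so injectivity is never used.

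One point to make explicit in the MAC reduction. The reduction cannot decide whether a credential with an unregistered $\uid$ verifies without querying its oracle on that $\uid$, which would destroy freshness. So it should reject all such credentials; this is identical to the real game until the bad event. It then outputs a uniformly guessed one of them as its forgery, losing a factor equal to the number of $\auth_\enclave$ queries. Alternatively, work with a MAC game that provides a verification oracle.
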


\begin{proof}[Proof Sketch] We prove the rate-limitability of \sys{} with the following sketched hybrid game:  \\
\taggedpara{Game 1. Ciphertext-Level Manipulation} We first consider arbitrarily crafted ciphertexts. Malformed $\cipher$ or $\cipher''$ is rejected during registration or authentication. Otherwise, $\cipher$ and $\cipher''$ decrypt to some $\ck^*$ and $(t^*, \cnt^*, \cred^*)$, respectively, and the security of the underlying encryption schemes ensures that ciphertext manipulation gives no advantage beyond choosing or reusing these plaintext values.\\
\taggedpara{Game 2. Plaintext-Level Manipulation} We next consider adversarially chosen plaintexts. Since $\pk$ is public, the adversary may encrypt an arbitrary $\ck^*$, but this only creates an additional registration counted in $N$. A forged credential, an invalid $\cnt$ or $t$, or an incorrect shard $s$ is rejected by MAC verification, the range and time-window checks, or the check $s = h \bmod S$, respectively.\\
\taggedpara{Game 3. Replay Attacks} Finally, replaying the same $(t, \cnt, \cred)$ produces the same $h$ and is rejected because $h \in HL_s$. Hence, for each credential, at most $K$ distinct counter values can be accepted in any time window. Therefore, after $N$ registrations, the adversary can successfully authenticate at most $K \times N$ times in each time window.
\end{proof}

\begin{theorem}[Unlinkability of the \sys{} protocol]
Under the random oracle model, assuming IND-CCA security of AEAD and public-key encryption, INT-CTXT security of AEAD, EUF-CMA security of the MAC scheme, and the fact that $\LeakAccessPattern(\register_\enclave, \cdots), \LeakAccessPattern(\auth_\enclave, \cdots)$ only leaks negligible information, \sys{} satisfies unlinkability.
\end{theorem}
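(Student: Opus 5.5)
We argue by a sequence of hybrid games, starting from $\game^\unlink_\adversary$ and ending in a game where the challenge $\test$ is independent of $b$. In that final game $\Pr[b=b']=1/2$ exactly, and each transition costs at most a negligible amount. The structure mirrors the rate-limitability sketch above, but now the adversary is the service. It controls the enclave oracles $\oracle_\unlink^{\register_\enclave}$ and $\oracle_\unlink^{\auth_\enclave}$, sees the leakage $\AccessPatternLeakage$, and tries to distinguish $(\sigma_1,\sigma_2)$ (same user $i$) from $(\sigma_1,\sigma_3)$ (users $i,i'$). Each $\sigma_j$ consists of a ciphertext $\cipher''_j=\enc'((t,\cnt,\cred),\pk)$ and a shard index $s_j=\hash(t,\cnt,\uid)\bmod S$.

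\textbf{Game 1 (remove leakage).} Replace every $\LeakAccessPattern$ output by a simulated trace that depends only on public information: input lengths, $t'$, and whether the call returned $\bot$. The hypothesis that $\LeakAccessPattern(\register_\enclave,\cdots)$ and $\LeakAccessPattern(\auth_\enclave,\cdots)$ leak only negligible information, backed by the dynamic analysis of Section~\ref{sec:dynamic-analysis-results}, bounds this step. We must check that the acceptance bit itself is not secret-dependent in a linkable way. The enclave can reject only on decryption failure, MAC failure, or a wrong $\cnt$ or $t$. All of these are functions of the adversary's own query, except when it submits honest ciphertexts, and for those the outcome is always ``accept''.

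\textbf{Game 2 (hide registration).} Replace each honest registration ciphertext $\cipher=\enc'(\ck,\pk)$ with an encryption of a fresh random key. We then answer $\oracle_\unlink^{\register_{\user,2}}$ via a lookup table when the adversary forwards the genuinely produced $\cipher'$. If it forwards a modified $\cipher'$, we reject, using INT-CTXT of the AEAD under $\ck$. In the sequel $\ck$ and $\cred=(\code,\uid)$ are hidden from the adversary except through authentication. Note that the adversary runs the enclave itself, so it can learn $\uid$ only by decrypting, which requires $\sk$ held inside the TEE. We model the enclave oracle as a black box and invoke IND-CCA of $\enc'$. The decryption oracle here is exactly the $\auth_\enclave/\register_\enclave$ oracle, which answers on all ciphertexts except the challenge ones.

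\textbf{Game 3 (hide authentication).} Replace each honest $\cipher''$ produced by $\oracle_\unlink^{\auth_\user}$ with an encryption of a fixed dummy plaintext. We answer enclave queries on these exact ciphertexts using the stored true value $h=\hash(t,\cnt,\uid)$. This step is bounded by IND-CCA of $\enc'$ through a hybrid over the polynomially many honest ciphertexts.

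\textbf{Game 4 (randomize tokens).} Now $\uid$ enters the adversary's view only through random-oracle outputs $\hash(t,\cnt,\uid)$, namely the values $s$ and the $h$ returned for honest ciphertexts. Each honest authentication uses a fresh pair $(t,\cnt)$ per user; the counter increments and is reset only with a new $t$. Hence all queried points are distinct unless two users share a 256-bit $\uid$, which happens with negligible probability. We replace these outputs with independent uniform values. The only way to notice the change is for the adversary to query $\hash$ on some honest $(t,\cnt,\uid)$, which requires guessing $\uid$; this has probability at most $q_\hash/2^{|\uid|}$. In Game 4, $\sigma_2$ and $\sigma_3$ are identically distributed and independent of $\sigma_1$ (each is a dummy ciphertext plus a uniform $s$), so the advantage is $0$.

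\textbf{Main obstacle.} The hardest step is Game 3 together with the leakage step. We must argue that the adversary's adaptive access to $\auth_\enclave$, and in particular its ability to maul honest ciphertexts or resubmit them with different $t'$, does not leak $\uid$ or $\cnt$ through $h$, through acceptance or rejection, or through access patterns. My plan is to use CCA security so that any mauled ciphertext either decrypts to an unrelated plaintext or is rejected. Exact replays give only the already-known $h$, which in Game 4 is a uniform value tied to a fresh $(t,\cnt)$. Replay with a different $t'$ is rejected by the $t\isequal t'$ check, and this rejection reveals only the public $t$. EUF-CMA of the $\mac$ ensures that adversarially crafted credentials never produce tokens colliding with honest users' tokens.
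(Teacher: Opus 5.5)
Your argument uses the same ingredients as the paper's proof: the leakage hypothesis, INT-CTXT for $\cipher'$, IND-CCA replacement of honest ciphertexts by dummies with exact replays answered from stored tokens, lazily sampled oracle outputs for $h$ and $s$, and the $t\isequal t'$ case. You apply them in a different order, and yours is the sounder one. The paper randomizes $h$ in its Game~4 while the honest ciphertexts still encrypt $\uid$. It applies IND-CCA only afterwards, and only to the challenge ciphertexts. So its claim that the adversary queries $\hash(t,\cnt,\uid)$ with negligible probability is only computational at that point. You first hide the registration ciphertext and \emph{every} honest $\cipher''$, including the non-challenge ones for users $i,i'$ that $\oracle_\unlink^{\auth_\user}$ hands out. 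That sets up an information-theoretic guessing bound instead.

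That bound is not yet justified, because a hop is missing. Your Game~4 claims that $\uid$ enters the view only through random-oracle outputs. But $\oracle_\unlink^{\register_\enclave}$, queried on an honest $\cipher_n$, returns $\cipher'_n=\enc(\cred_n,\ck_n)$, and none of your games removes $\cred_n$ from it. Your Game~2 only makes $\ck_n$ independent of $\cipher_n$ and, via the lookup table and INT-CTXT, removes decryption under $\ck_n$.

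You need an additional hop after Game~2. Replace each such $\cipher'_n$ by an AEAD encryption of a fixed-length dummy credential, and keep the real $\cred_n$ only in the table consulted by $\oracle_\unlink^{\register_{\user,2}}$. Since $\ck_n$ is then used only for encryption, AEAD confidentiality bounds this hop. IND-CPA suffices; it is implied by the theorem's IND-CCA hypothesis on the AEAD, which your proposal otherwise never uses. Only after this hop does your guessing bound hold, and it should read roughly $q_\hash N/2^{|\uid|}$ plus a birthday term for $\uid$ collisions.

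Smaller points:

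\emph{Game~1.} Honest ciphertexts are not ``always'' accepted: they are rejected when $t'$ differs from the window in which they were produced. The bit is still publicly computable, as you note later.

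\emph{IND-CCA hops.} The enclave oracles are not ``exactly'' the decryption oracle. The reduction simulates $\register_\enclave$ and $\auth_\enclave$ by forwarding non-challenge ciphertexts to its decryption oracle and running the MAC, range, time-window and hash steps itself. This requires a MAC key independent of the decryption key, which the paper's single $\sk$ obscures (it only remarks that the keys are distinct in the implementation). Seen this way, your ``main obstacle'' largely dissolves: mauled ciphertexts need no characterization as decrypting to unrelated plaintexts, since they are processed identically in both worlds of the reduction.

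\emph{Aborted runs.} Your ``$\Pr[b=b']=1/2$ exactly'' presupposes that aborted runs are scored as a coin flip. Under the literal definition, an adversary that outputs $i=i'$, or an unregistered or counter-exhausted $i$, forces the game to return $\bot$ and gets advantage $1/2$. The paper shares this issue, but your write-up should state the convention.
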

\begin{proof}[Proof Sketch]
We prove the unlinkability of \sys{} via the following sequence of hybrid games: \\
\taggedpara{Game 1. Invalid Ciphertext}
We first consider malformed ciphertexts submitted to the oracles.
By the integrity and security of the underlying encryption schemes, such ciphertexts are rejected except with negligible probability and therefore reveal no information about the challenge bit. \\
\taggedpara{Game 2. Invalid Plaintext}
Since $\pk$ is public, the adversary can encrypt arbitrary plaintexts, including arbitrary $(t,\cnt,\cred)$ values.
However, a fresh $\cred$ is accepted only if it contains a valid MAC, whose forgery succeeds with only negligible probability. \\
\taggedpara{Game 3. Invalid $t'$}
If the adversary submits an arbitrary time window $t'^*$, the request is processed normally if $t = t'^*$; otherwise, $\bot$ is returned. This verification reveals only whether the submitted time window matches $t$, which is common to all legitimate users in our model. Since this information is independent of the challenge bit, it provides no distinguishing advantage. \\
\taggedpara{Game 4. Replacing Random-Oracle Outputs}
We replace each $h=\hash(t,\cnt,\uid)$ with a consistently lazy-sampled random value.
Because $\uid$ is hidden and has sufficiently high entropy, the adversary queries the corresponding random-oracle input only with negligible probability.
Moreover, distinct authentications use distinct $(t,\cnt,\uid)$ tuples, so the resulting $h$ values, and hence $s=h\bmod S$, provide no advantage in linking requests. \\
\taggedpara{Game 5. Replacing Ciphertexts}
We replace each challenge authentication ciphertext with an encryption of a fixed-length dummy plaintext, while answering repeated challenge-ciphertext queries consistently using the $h$ values sampled in Game~4.
By the security of the public-key encryption scheme, the adversary cannot distinguish the real and dummy ciphertexts except with negligible probability. \\
\taggedpara{Game 6. Remaining Values}
The remaining non-random values are the public user-state handle $n$ and the success status, neither of which depends on the challenge bit.
Furthermore, by the assumed security of $\operatorname{Leak}_{\mathsf{access}}$, the observed access patterns reveal only negligible information.
Thus, the challenge distributions are identical in the final game, and the adversary's success probability in the original game is at most $1/2+\negl$.

\end{proof}

%% file: text/10_discussion.tex
\section{Discussion}

\label{sec:discuss}

\subsection{TEE Timing \& Cache-line Side-channel Attacks}
\sys{} is designed to mitigate any access-pattern leakage; however, a TEE-based system may also leak other side channel information, including timing and cache-line side channels. 
Although these attacks are treated outside the scope of this work and \sys{} does not explicitly guarantee resistance to such attacks, its design may reduce the available attack surface.
This is because, although we do not conduct any detailed security analysis against these attacks, \sys{} does not include potentially dangerous secret-dependent control flow and memory accesses, which many timing and cache-line side-channel attacks rely on.

\subsection{Protocol-level Timing-Correlation Attack}
\sys{} provides stronger mitigation than Privacy Pass against attacks that correlate credential replenishment with subsequent use. 
    Privacy Pass users must periodically replenish their passes, potentially while revealing their identities. Although the passes themselves cannot be directly linked to this identity information, a colluding service and pass issuer may exploit timing correlations between replenishment and spending events to link multiple spending events to one another as originating from the same user. 
    This risk is particularly pronounced when a pass is spent shortly after replenishment, because the temporal proximity provides a stronger signal for associating subsequent spending events with the same user.
In contrast, \sys{} has no pass-replenishment phase and therefore exposes no corresponding pair of events to correlate.
This difference gives \sys{} a concrete advantage over Privacy Pass under collusion between the service and authorities (see Sections~\ref{sec:related-work} and \ref{sec:system}).\footnote{
    In \sys{}, registration may be linked to a user's first authentication request, but this does not enable linking multiple authentication events. Similar to other anonymous rate-limiting and authentication protocols, including Scrappy~\cite{Scrappy}, we consider this correlation out of scope.
}

\subsection{Tracking Users via Clock Skew}
\sys{} assumes that the user clocks are synchronized. This is because users whose clocks exhibit drift may be tracked by the server if that drift is exploited.
An alternative design is to use a service-provided time period $t$, an approach leveraged by related works such as Scrappy~\cite{Scrappy} and CACTI~\cite{CACTI}.
However, introducing this approach into \sys{} would approximately double \sys{'s} communication costs, resulting in a trade-off between privacy and latency.

\section{Limitation}
A limitation of \sys{} is that its security depends on the security of the underlying TEE. Additionally, our access-pattern leakage analysis covers secret-dependent branches and memory accesses in the protocol logic and excludes other leakage patterns.

%% file: text/11_conlusion.tex
\section{Conclusion}

In this work, we proposed \sys{}, a fast and anonymous rate limiter that utilizes TEEs on the server side. The \sys{} protocol avoids secret-dependent control flow and memory-access patterns that could reveal user-related secrets and its servers are distributed with the sharding mechanism.

%% file: text/C_notation.tex
\subsection{Notation}
\label{sec:notation}
The notation used throughout this work is summarized in Table~\ref{tab:notation}.

%% file: text/D_proceedure.tex
\subsection{\sys{} Procedure Details}
\subsubsection{Registration}

\begin{enumerate}
    \item The user executes the first user-side registration function as follows: $\pmb{(\cipher, \ck) \gets \register_{\user,1}(\pk)}$. The function randomly generates an ephemeral symmetric key $\ck$ and encrypts it under the Auth Enclave's public key to produce $\cipher \leftarrow \enc'(\ck, \pk)$, where $\enc'$ is asymmetric encryption.
    \item The user transmits their identity information $\identity$ along with the ciphertext $\cipher$ to the service.
    \item The service atomically verifies that $\identity \notin IL$ and stores $\identity$ in $IL$.
    \item The service invokes the Auth Enclave, which executes the enclave-side registration function as follows: $\pmb{\cipher' \gets \register_\enclave(\cipher, \sk)}$. The function generates the encrypted user credential $\cipher'$ as follows:
    \begin{enumerate}
        \item The Auth Enclave randomly generates a unique user ID $\uid$.
        \item The Auth Enclave computes the authentication code $\code \leftarrow \mac(\uid, \sk)$.
        \item The Auth Enclave binds the credential as $\cred \leftarrow (\code, \uid)$.
        \item The Auth Enclave decrypts $\cipher$ to recover the symmetric key $\ck \leftarrow \dec'(\cipher, \sk)$.
        \item The Auth Enclave encrypts the provisioned data to produce $\cipher' \leftarrow \enc(\cred, \ck)$, where $\enc$ is authenticated symmetric key encryption (i.e., AEAD).
    \end{enumerate}
    \item The Auth Enclave returns $\cipher'$ to the user via the service.
    \item The user executes the second user-side registration function as follows: $\pmb{\cred \gets \register_{\user,2}(\cipher', \ck)}$. The function locally decrypts $\cipher'$ using $\ck$ to retrieve $\cred$, i.e., $\cred \leftarrow \dec(\cipher', \ck)$.
\end{enumerate}

\subsubsection{Authentication}

\begin{enumerate}
    \item The user executes the user-side authentication function as follows: $\pmb{(\cipher'', s, \cnt) \gets \auth_\user(\cred, \cnt, \pk)}$. The function constructs the authentication payload $\cipher''$ as follows:
    \begin{enumerate}
        \item The user queries their local clock to obtain the current time window $t$.
        \item The user verifies that the counter respects the threshold limit: $\cnt < K$.
        \item The user encrypts the authentication data using the Auth Enclave's public key: $\cipher'' \leftarrow \enc'((t, \cnt, \cred), \pk)$.
        \item The user computes the service index using the current time window, counter, and user id $s \gets \hash(t, \cnt, \uid) \bmod S$.
        \item The user increments their local session counter: $\cnt \leftarrow \cnt + 1$.
    \end{enumerate}
    \item The user transmits $\cipher''$ to the service indexed by $s$.
    \item The service forwards $\cipher''$ and current time $t'$ to the Auth Enclave for verification.
    \item The Auth Enclave executes the enclave-side authentication function as follows: $\pmb{h \gets \auth_\enclave(\cipher'', t', \sk)}$. The function decrypts and validates the authentication payload as follows, using the service-provided $t'$ for the time-window check:
    \begin{enumerate}
        \item The Auth Enclave decrypts the payload to extract $t, \cnt$, and $\cred$: $(t, \cnt, \cred) \leftarrow \dec'(\cipher'', \sk)$.
        \item The Auth Enclave parses $\cred$ into $(\code, \uid)$ and verifies the cryptographic integrity of the credential by checking $\code \isequal \mac(\uid, \sk)$.
        \item The Auth Enclave asserts that the counter satisfies $\cnt < K$.
        \item The Auth Enclave checks equality of time-windows: $t \isequal t'$.
        \item Upon successful verification, the Auth Enclave generates a blind token $h \leftarrow \hash(t, \cnt, \uid)$ and returns $h$.
    \end{enumerate}
    \item The service executes the atomic service-side authentication function as follows:
    $\pmb{HL_s \gets \auth_\service(h,HL_s)}$.
    The function performs the following steps:
    \begin{enumerate}
        \item The service checks that the request is routed to the correct service index
        $s = h \bmod S$.
        \item The service checks that the token has not been previously consumed
        ($h \notin HL_s$); otherwise, it returns $\bot$.
        \item The service records $h$ in the duplicate-check list $HL_s$.
\end{enumerate}
    \item Finally, the service returns authentication success to the user.
\end{enumerate}

%% file: table/notation.tex
\begin{table}[t]
    \footnotesize
    \caption{Notation used in this work and its meaning}
    \label{tab:notation}
    \centering
    \begin{tabular}{ll}
        \hline
        \textbf{Symbol} & \textbf{Meaning / Description} \\
        \hline
        $\sk$ & Secret key held exclusively by the Auth Enclave \\
        $\pk$ & Public key corresponding to $\sk$ \\
        $\ck$ & Ephemeral symmetric key \\
        $\mac(\cdot,\cdot)$ & Message authentication code generation \\
        $\hash(\cdots)$ & Hash function \\
        $\enc(\cdot,\cdot)$ & Authenticated symmetric key encryption (i.e., AEAD) \\
        $\dec(\cdot,\cdot)$ & Authenticated symmetric decryption \\
        $\enc'(\cdot,\cdot)$ & Public-key encryption \\
        $\dec'(\cdot,\cdot)$ & Public-key decryption \\

        $\cnt$            & Counter \\
        $\code$           & MAC value \\
        $\cred$           & User credential \\
        $\cipher$         & Ciphertext \\
        $h$               & Blind token \\
        $t$               & Current time window \\
        $\uid$            & User ID \\
        $\identity$       & User identity information (e.g., hardware id) \\
        $HL_s$            & Set of previously seen $h$ values by server $s$ \\
        $IL$              & Set of previously seen $\identity$ values \\
        $K$               & Upper bound of the counter \\
        $s$               & Index of server \\
        $S$               & Number of servers \\
        $N$               & Number of registered users \\
        $n$               & User index \\
        \hline
    \end{tabular}
\end{table}

%% file: text/B_side-channel-vector.tex
\subsection{Secret-dependent conditional branches and access-pattern changes}
\label{sec:valgrind-results}

\begin{enumerate}
\label{sec:secret-depend-branch}
    \item
    \verb| <hpke_ng::kem::dh::X25519 as | \\
    \verb|hpke_ng::kem::dh::DiffieHellman>::dh| 

    This function checks whether the Diffie-Hellman output is an all-zero value. Even if the service impersonates a user and triggers this condition by using an invalid public key as the user-side ephemeral public key, this would not help the provider recover any other sensitive information. Thus, it does not constitute a meaningful leak.
    \label{enum:secret-branch:decap}

    \item
    \verb|<hpke_ng::aead::ChaCha20Poly1305 as | \\
    \verb|hpke_ng::aead::SealingAead>::open|

    This function branches on the result of the MAC check. Revealing whether MAC verification succeeds does not leak any sensitive information.
    \label{enum:secret-branch:decrypt_in_place_detached}

    \item
    \verb|star_core::serv::enclave::Enclave|\\
    \verb|::issue_token_plain|

    This function has a conditional branch that checks whether $\cnt < K$, whether $t$ matches the current time window, and whether $\cred$ is valid. The results of these checks only leak information that is safe to reveal.
\end{enumerate}

%% file: text/A_proof.tex
\subsection{Full Security Proof}
\label{sec:full-security-proof}
\subsubsection{Rate-Limitability of \sys}

\begin{proof}
In this proof, we show that the adversary cannot win the rate-limitability game with non-negligible probability. We prove this via a hybrid game showing that an authentication requires a valid credential and that the valid credential enables a limited number of authentications.

For clarity, let $T$ denote the responses returned by the oracles to the adversary, and let $R$ denote the corresponding queries submitted by the adversary. The oracle responses visible to the adversary are 
\begin{align*}
    T_\ratelimit^\register = \cipher', 
    T_\ratelimit^\auth = 1 \\ 
\end{align*}
for adversarially chosen oracle queries of the form
\begin{align*}
    R_\ratelimit^\register = \cipher, 
    R_\ratelimit^\auth = (\cipher'', s). \\ 
\end{align*}
We show that none of these oracle interactions gives the adversary a non-negligible advantage in generating valid $R_\ratelimit^\auth$ $i (> K \times N)$ times through the following sequence of hybrid games.

\begin{itemize}
\item \textbf{Game 1. Ciphertext-Level Manipulation:}
    We first consider arbitrary ciphertexts constructed by the adversary, including those adaptively derived from previous oracle interactions. In particular, the adversary may construct $\cipher$ using previously obtained $\cipher'$, or construct $\cipher''$ using previously obtained values such as $\ck$, $\cred$, or oracle responses.
    
    A malformed $\cipher$ or $\cipher''$ is rejected if public-key decryption fails and, therefore, cannot increase $N$ or $i$. Otherwise, $\cipher$ decrypts to some $\ck^*$, and $\cipher''$ decrypts to some tuple $(t^*,\cnt^*,\cred^*)$. Similarly, any value legitimately recovered from $\cipher'=\enc(\cred,\ck)$ may be reused in subsequent queries. However, any additional useful information about $\ck$, $\cred$, or secret key material obtained from these ciphertexts with non-negligible probability would contradict the IND-CCA security of the underlying encryption schemes.
    
    Thus, except with negligible probability, adaptive manipulation of $\cipher$, $\cipher'$, and $\cipher''$ gives the adversary no advantage beyond choosing or reusing the plaintext values $\ck$, $\cred$, $t$, and $\cnt$. We analyze manipulations of these values, together with the routing value $s$, in the next game.
        
    \item \textbf{Game 2. Plaintext-Level Manipulation:}
    We now consider valid ciphertexts that contain plaintexts chosen by the adversary.
    Since $\pk$ is public, the adversary can generate and submit a valid $\cipher$ encrypting an arbitrary key $\ck^*$.
    This is permitted; however, because $\ck^*$ is chosen and therefore known by the adversary, doing so merely allows the adversary to complete an additional registration.
    Each such registration is counted by a call to $\oracle_\ratelimit^\register$ and increments $N$, and therefore does not help the adversary exceed the final bound of $K \times N$ successful authentications.

    Similarly, a valid $\cipher''$ may encrypt an arbitrary tuple $(t,\cnt,\cred)$.
    However, for a fresh credential $\cred=(\code,\uid)$ to be accepted, it must satisfy
    \[
    \code=\mac(\uid,\sk).
    \]
    By the EUF-CMA security of the MAC scheme, the probability that the adversary successfully generates such a fresh valid credential is negligible.

    The adversary may also attempt to manipulate the time window value $t$ or counter value $\cnt$, for example, by submitting a time window that is different from the current one or a counter value outside the permitted range.
    However, any value $\cnt \geq K$ is rejected by the range check performed by $\auth_\enclave$.
    Similarly, a value of $t$ that does not correspond to the current time window is rejected by the time-window check performed by $\auth_\enclave$.
    Thus, except with negligible probability, every accepted authentication request contains a valid registered credential, the current time window, and a counter satisfying $\cnt < K$.

    Finally, the adversary may forge $s$ and redirect a request to another service $s' \neq s$ to bypass the duplicate check in $HL_s$. In particular, even if the adversary has already authenticated with shard $s$ where $h \in HL_s$, it may attempt to re-authenticate with the same $h$ at another shard $s'$ for which $h \notin HL_{s'}$. This attack fails because $\auth_\service$ verifies that the request was routed to the shard that satisfies $s = h \bmod S$. As we show above, for each authentication request $\auth_\enclave$ verifies the credential, checks $\cnt < K$ and the time window $t$, and computes $h$ from the validated values $(t,\cnt,\uid)$. Therefore, except with negligible probability, any forged routing value $s'$ is rejected before the adversary can bypass duplicate detection.
    
    \item \textbf{Game 3. Replay of Previously Used Values:}
    The remaining strategy available to the adversary is to replay a previously accepted tuple $(\cnt,t,\cred)$.
    However, replaying the same tuple causes $\auth_\enclave$ to derive the same blind token
    \[
    h = \hash(t, \cnt, \uid).
    \]
    Since this value has already been accepted in the past, it is recorded in $HL$, and therefore any subsequent requests that produce the same $h$ are rejected by the check $h \notin HL$.
    Hence, the adversary cannot successfully authenticate more than once using the same tuple $(\cnt,t,\cred)$.

    Combining the above games, for any fixed time window $t$ and credential $\cred$, each successful authentication must use a distinct counter value satisfying $\cnt < K$.
    Hence, each credential can be used for at most $K$ successful authentications within a single time window.
    The number of credentials available to the adversary is at most $N$, corresponding to the number of calls to $O_\ratelimit^\register$.
    Therefore, the total number of successful authentications within a single time window is bounded by $K \times N$.
    Thus, the probability that the adversary wins the rate-limitability game by achieving more than $K \times N$ successful authentications is negligible.
    Thus, \sys{} satisfies rate-limitability.
\end{itemize}

\end{proof}

\subsubsection{Unlinkability of \sys}

\begin{proof}
We show that any PPT adversary has only a negligible advantage in guessing the challenge bit $b$, even after making arbitrary queries to the permitted oracles. In the unlinkability game, the adversary receives $(\sigma_1,\sigma_2)$, where both authentication ciphertexts are generated from the same user id $\uid$, or $(\sigma_1,\sigma_3)$, where the latter authentication ciphertext is generated from a different user state. We show that these two distributions are computationally indistinguishable.

For clarity, let $T$ denote the responses returned by the oracles to the adversary, and let $R$ denote the corresponding queries submitted by the adversary. The oracle responses visible to the adversary are
\begin{align*}
T_\unlink^{\register_\user, 1}
    &= (\cipher, n), \quad
       T_\unlink^{\register_\user, 2} = 1, \quad
       T_\unlink^{\auth_\user} = (\cipher'', s),
\\
T_\unlink^{\register_\enclave}
    &= (\cipher', \AccessPatternLeakage),
T_\unlink^{\auth_\enclave}
    = (h, \AccessPatternLeakage).
\end{align*}
for adversarially chosen oracle queries of the form
\begin{align*}
    R_\unlink^{\register_\user, 2} &= (\cipher', n),   
    &R_\unlink^{\register_\enclave} &= \cipher,  \\
    R_\unlink^{\auth_\user} &= n, 
    &R_\unlink^{\auth_\enclave} &= (\cipher'', t').
\end{align*}
We show that none of these oracle interactions gives the adversary a non-negligible advantage in distinguishing the two challenge distributions through the following sequence of hybrid games.

\begin{itemize}
    \item \textbf{Game 1. Invalid Ciphertext: }
    Consider malformed ciphertexts submitted to the oracles. A $\cipher'$ given to the second registration oracle is accepted only if it is a valid AEAD ciphertext under the hidden key $\ck$ associated with $n$. By INT-CTXT security, forging a fresh valid ciphertext is possible only with negligible probability; otherwise, the query is rejected without modifying the user state. Similarly, malformed public-key ciphertexts $\cipher$ or $\cipher''$ are rejected if decryption fails. Hence, malformed ciphertexts reveal no information about the challenge bit except with negligible probability.

    \item \textbf{Game 2. Invalid Plaintext: }
    Now consider valid ciphertexts containing adversarially chosen plaintexts. Since $\pk$ is public, the adversary can submit a valid $\cipher$ encrypting an arbitrary key $\ck^*$. However, the resulting $\cipher'$ is encrypted under $\ck^*$ and cannot be decrypted by the second registration oracle for an honest index $n$ unless $\ck^*=\ck$. Thus, this cannot modify an honest user's state or reveal information about the challenge.

    Similarly, a valid $\cipher''$ may encrypt any $(t,\cnt,\cred)$, but acceptance with a fresh $\cred=(\code,\uid)$ requires $\code=\mac(\uid,\sk)$. By EUF-CMA security of the MAC, producing such a fresh credential succeeds only with negligible probability.
    
    \item \textbf{Game 3. Invalid $t'$: }
    The adversary may submit an arbitrary time window $t'^*$ to $\oracle_\unlink^{\enclave_\auth}$ in an attempt to obtain additional information.
    We consider two cases.
    If $t = t'^*$, the oracle processes the query normally.
    Otherwise, if $t \neq t'^*$, the time-window check fails and the oracle returns $\bot$.
    Thus, this check reveals only whether the supplied time window matches the current one.
    Since all honest users share the same current time window $t$, this outcome is independent of the user's identity and the challenge bit.
    Therefore, it provides no advantage in distinguishing the challenge distributions.
    
    \item \textbf{Game 4. Replacing Random-Oracle Outputs: }
    In this game, we replace each $$h=\hash(t,\cnt, \uid)$$ with the output of a lazily sampled random function. Since $\uid$ remains hidden from the adversary and has sufficient entropy, the probability that the adversary queries the exact corresponding input is negligible. The simulation remains consistent for repeated $(t,\cnt,\uid)$ tuples, while distinct challenge authentications have distinct tuples because $\cnt$ or $\uid$ differs. The resulting $h$ values of these tuples are therefore independent random values of equal length, up to negligible probability. For the same reason, the $s \gets h \bmod S$ values from these $h$ values would appear randomly to the adversary, and thus provide no advantage in linking authentication requests.
    
    \item \textbf{Game 5. Replacing Ciphertexts: }
    In this game, we replace each challenge authentication ciphertext with an encryption of a dummy plaintext that has the same fixed length as the authentication tuple. For consistency, if an exact challenge ciphertext is submitted to $\oracle_\unlink^{\auth_\enclave}$, the oracle returns the corresponding randomly sampled \(h\) (or $\bot$) defined in Game 4; all other oracle queries are processed normally.
    
    By the IND-CCA security of the public-key encryption scheme, the adversary cannot distinguish encryptions of the real authentication tuples from encryptions of fixed dummy plaintexts, except with negligible probability. This means that the ciphertexts generated from different inputs do not reveal useful information to distinguish the inputs, i.e., whether $\sigma_2$ and $\sigma_3$ originated from the same user that generated $\sigma_1$. %
        
    \item \textbf{Game 6. Remaining Values: }
    The only remaining non-random visible values are the public index $n$ and the constant success value $1$, and non-advantageous information $\AccessPatternLeakage$. Here, $n$ is only a handle for user state, so it does not reveal the challenge bit. After the previous replacements, $\sigma_2$ and $\sigma_3$ have identical distributions. Hence $(\sigma_1,\sigma_2)$ and $(\sigma_1,\sigma_3)$ are identical in the final hybrid game, and the adversary guesses $b$ with a probability of exactly $1/2 + \negl(\lambda)$, which is the same as random guessing except with negligible probability.

\end{itemize}
\end{proof}